\newcommand{\ver}{arXiv 4.6}
\newtheorem{lemma}{Lemma}[section]
\newtheorem{theorem}{Theorem}[section]
\newtheorem{assumption}{Assumption}[section]
\numberwithin{equation}{section}
\numberwithin{figure}{section}
\newcommand{\ieor}{Department of Industrial Engineering \& Operations Research, Columbia University}
\renewcommand{\seqs}{\vspace{-.4cm}\begin{eqnarray}}
\renewcommand{\seqe}{\end{eqnarray}\vspace{-.4cm}}
\newcommand{\spd}{s}
\newcommand{\mspd}{\sigma}
\newcommand{\p}[1]
{
	\ifthenelse{\isempty{#1}}{\Pi}{\pi(#1)}
}
\newcommand{\xb}{\bar{x}}
\newcommand{\xt}{\tilde{x}}
\newcommand{\pw}{\beta}
\newcommand{\ect}[1]{$1|#1|\sum E_i(s_i)+\sum w_iC_i$}
\newcommand{\etr}[1]{$1|#1|\sum E_i(s_i)+\sum w_iT_i$}
\newcommand{\gect}[1]{$1|#1|\sum \mathcal{E}_i(s_i)+\sum w_iC_i$}
\newcommand{\getr}[1]{$1|#1|\sum \mathcal{E}_i(s_i)+\sum w_iT_i$}
\newcommand{\algI}{SAIAS}
\newcommand{\algIlong}{Schedule by $\alpha$-intervals and $\alpha$-speeds}
\newcommand{\algdue}{SAIAS-T}
\newcommand{\algduelong}{Schedule by $\alpha$-intervals and $\alpha$-speeds for Tardiness}
\newcommand{\newhead}[2]{\fancyhead[L]{#1} \fancyhead[R]{#2}}
\title{Energy Aware Scheduling for Weighted Completion Time and Weighted Tardiness}
\author{
	Rodrigo A. Carrasco
	\thanks{ {\tt rac2159@columbia.edu}.
		\ieor, Mudd 313, 500W 120th Street, New York, NY 10027. Research partially supported by NSF grants CCF-0728733 and CCF-0915681, and Fulbright/Conicyt Chile Scholarship.
	}
	\and
	Garud Iyengar
	\thanks{ {\tt garud@ieor.columbia.edu}.
		\ieor, Mudd 314, 500W 120th Street, New York, NY 10027. Research partially supported by NSF grant DMS-1016571, ONR grant N000140310514, and DOE grant DE-FG02-08ER25856.
	}
	\and
	Cliff Stein
	\thanks{ {\tt cliff@ieor.columbia.edu}.
		\ieor, Mudd 326,500W 120th Street, New York, NY 10027. Research partially supported by NSF grants CCF-0728733 and CCF-0915681.
	}
}
\date{\monthname ~2011, v.\ver}
\begin{document}
\maketitle

\thispagestyle{empty}
\begin{abstract}
	The ever increasing adoption of mobile devices with limited energy storage capacity, on the one hand, and more awareness of the environmental impact of massive data centres and server pools, on the other hand, have both led to an increased interest in energy management algorithms.
	
	The main contribution of this paper is to present several new constant factor approximation algorithms for energy aware scheduling problems where the objective is to minimize weighted completion time plus the cost of the energy consumed, in the one machine non-preemptive setting, while allowing release dates and deadlines.Unlike previous known algorithms these new algorithms can handle general job-dependent energy cost functions, extending the application of these algorithms to settings outside the typical CPU-energy one. These new settings include problems where in addition, or instead, of energy costs we also have maintenance costs, wear and tear, replacement costs, etc., which in general depend on the speed at which the machine runs but also depend on the types of jobs processed. Our algorithms also extend to approximating weighted tardiness plus energy cost, an inherently more difficult problem that has not been addressed in the literature.
\end{abstract}
\keywords{energy aware scheduling, approximation algorithms, $\alpha$-points, weighted tardiness}
\newpage


\section{Introduction}
\label{sec:intro}
Managing energy consumption is a problem of critical interest throughout the world and throughout various industries. Computing devices use a large amount of energy, both in individual devices such as laptops and PDAs and also in large industrial uses such as datacenters. For example, Google states that the servers in its datacenter, which are much more efficient than the average industry server, consume  $1$kJ per query on average \cite{Google09}. In January 2011, just in the US, there were an average more than $400$ million queries per day \cite{Comscore11}, and thus the total amount of energy consumed was $44.5$ million kWh, equivalent to more than $4,000$ average US households \cite{DOE11}. Furthermore, CPUs account for 50-60\% of a typical computer's energy consumption \cite{Albers2009}, making CPU energy management very important. When scheduling on such devices, it is important not only to consider the relevant quality of service (QoS) metrics such as makespan or weighted completion time, but also to take energy consumption into account.
Most modern CPUs can be run at multiple speeds; the lower the speed, the less energy used, and the relationship is device-dependent, but typically superlinear.
The technique of scheduling while controlling the speed of the processor is known as {\em speed scaling.}

Starting with the work of Yao, Demers, and Shenker \cite{Yao1995}, there has by now been tens of papers studying scheduling problems in which energy consumption is taken into account. (See, for example, the surveys by Irani and Pruhs \cite{Irani2005} and that by Albers \cite{Albers2009}). There are three main settings for energy aware scheduling problem: optimizing a QoS metric with an energy budget \cite{Pruhs2007, Pruhs2008}, minimizing energy subject to a QoS constraint \cite{Bansal2008, Bansal2004, Bansal2007a, Yao1995}, or optimizing some convex combination of a scheduling objective and energy consumption \cite{Albers2007, Andrew2009, Bansal2009, Bansal2007b}. Underlying the latter setting, which is in the one we will focus in this work, is an assumption that both energy and time can be (implicitly) converted into a common unit, such as dollars.

\subsection{Our results}
In this paper we consider two commonly studied scheduling metrics, {\em  weighted completion time} and {\em weighted tardiness}, that have not received attention in the energy aware scheduling literature. Given a schedule in which job $i$ with  weight $w_i$, release time $r_i$, and deadline $d_i$ is completed at time $C_i$, the total weighted completion time is $\sum_i w_i C_i$. The tardiness of a job is zero if it is completed before its deadline and otherwise equal to the amount by which it misses, that is, $T_i = \max\{0, C_i-d_i\}$ and total weighted tardiness is $\sum w_iT_i$. For both these metrics, we consider the non-preemptive, off-line problem on one machine, and allow arbitrary precedence constraints. For the weighted completion time we allow arbitrary release dates as well. We consider a metric that is a convex combination of our scheduling metric and energy cost. We are not aware of any previous work on energy aware scheduling algorithms for these metrics. There is a rich literature on minimizing weighted completion time in the absence of energy concerns (e.g. \cite{Phillips1998, Pinedo2008,Skutella2006}), but we are aware of only one result about weighted tardiness in the absence of energy concerns in the speed scaling/resource augmentation literature \cite{Bansal2007c}, where a $2$-machine, $24$-speed $4$-approximation algorithm is presented. Weighted tardiness, in particular, is difficult to analyze because, in contrast to most scheduling objectives, it is a non-linear function of completion time.

In our work we consider a more general model of energy cost than has previously been used. The most common energy model assumes that the rate at which power is consumed is a polynomial function of speed of the form $P(s)=s^{\pw}$ for some constant $\pw$; typical values of $\beta$ are 2 or 3. Some recent work\cite{Andrew2009, Bansal2009} uses a more general power function with minimum regularity conditions, like non-negativity, but in all the cases the power function does not depend on the job. Furthermore, most energy aware algorithms assume cost functions that are closely related to energy consumption; however, in practice the actual energy cost is not simply a function of energy consumption, it is a complicated function of discounts, pricing, time of consumption, etc. We consider a more general class of cost functions that are only restricted to be non-negative and can be different for different jobs. Because we allow job-dependent energy costs, our algorithms can be used outside the CPU-energy setting, where energy cost generally are job independent, and can be applied to more general problems that have additional speed-associated costs. Examples of these costs are maintenance costs, wear and tear of parts, failure rates, etc. all of which not only depend on the speed at which the machine runs, but also the job being processed. We are not aware of any other work that allows such general costs. For the weighted tardiness case we require an additional regularity condition on the energy cost functions that allows us to control its rate of growth.

Our paper contains several results for different scheduling problems, we state here the most general results:
\begin{theorem}
Given $n$ jobs with precedence constraints and release dates and a general non-negative energy cost function, there is an $O(1)$-approximation algorithm for the
problem of  non-preemptively minimizing a convex combination of weighted completion time and energy cost.
\label{th:general_wC}
\end{theorem}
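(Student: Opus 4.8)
The plan is to combine $\alpha$-point list scheduling for $1\mid r_i,\mathrm{prec}\mid\sum_i w_iC_i$ with a speed-selection step (essentially the algorithm the paper calls \algP). The first task is to tame the arbitrary, job-dependent functions $E_i$. For each job $i$ with work $v_i$, and each processing time $p$ that is a power of $1+\epsilon$ inside a suitable polynomially bounded range, define $e_{ip}$ to be (an arbitrarily good approximation of) the smallest value of $E_i(s)$ over speeds $s$ with $v_i/s\le p$ --- the cheapest way to finish job $i$ within a time budget of $p$ --- and discard dominated pairs. This leaves, per job, a non-increasing staircase of achievable (time, energy) tradeoffs of polynomial size. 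If in an optimal schedule we round every job's processing time up to the next such budget, then every completion time grows by at most a factor $1+\epsilon$ while the energy spent on each job does not increase (the old energy remains feasible for the larger budget); hence restricting attention to these discrete pairs costs only a $(1+\epsilon)$ factor. I expect this to be the main obstacle: not because it is deep, but because $E_i$ is assumed only non-negative --- not convex, monotone, or even continuous --- so the usual ``round the speed'' arguments fail and one is forced through the lower envelope of attainable (time, energy) pairs; the bounded-range claim (no job ever benefits from running slower than some polynomial bound, and speeds below some threshold are essentially free) also needs a short argument.

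Next I would set up a time- (or geometric-interval-) indexed LP that couples a standard formulation of $1\mid r_i,\mathrm{prec}\mid\sum_i w_iC_i$ with configuration variables $x_{ip}\ge 0$, one per job $i$ and Pareto pair $p$, subject to $\sum_p x_{ip}=1$, the amount of $i$ processed being constrained to equal $\sum_p p\,x_{ip}$, unit machine capacity per unit time, no processing of $i$ before $r_i$, precedence enforced on the fractional ``fraction-completed'' profiles, and objective $\sum_i w_i\bar C_i+\sum_i\sum_p e_{ip}x_{ip}$, where $\bar C_i$ is the usual mean-busy-time lower bound on $C_i$. By the previous paragraph this LP has optimum at most $(1+\epsilon)\cdot\mathrm{OPT}$.

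For the rounding: sample $\alpha\in(0,1]$ from the standard density and, independently for each job $i$, sample a pair $p$ with probability $\bar x_{ip}$, using it to fix job $i$'s processing time $p$ and its energy cost $e_{ip}$. Let $C_i^\alpha$ be the first time an $\alpha$ fraction of $i$ is processed in the LP solution; since the LP respects precedence, ordering jobs by $C_i^\alpha$ extends a topological order, and list scheduling in this order, starting each job at the later of the machine's free time and its release date, is feasible. The energy term is handled exactly: its expectation equals $\sum_i\sum_p e_{ip}\bar x_{ip}$, the LP energy. For the completion-time term, condition on $\alpha$. Only jobs $j$ with $C_j^\alpha\le C_i^\alpha$ precede $i$ in the list, and by time $C_i^\alpha$ the LP has processed at least an $\alpha$ fraction of each such job and at most $C_i^\alpha$ units of work in total, so $\alpha\sum_{j:\,C_j^\alpha\le C_i^\alpha}\bar p_j\le C_i^\alpha$, where $\bar p_j=\sum_p p\,\bar x_{jp}$. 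Combining this with the standard bound on release-date-induced idle time in the list-schedule inequality yields $C_i^{\mathrm{list}}\le O(1/\alpha)\cdot C_i^\alpha$; since the list-schedule bound is linear in the sampled processing times, taking expectations over the speed sampling replaces them by the means $\bar p_j$, and averaging over $\alpha$ with the chosen density turns $C_i^\alpha$ into a constant times $\bar C_i$.

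Putting the pieces together, the expected cost of the algorithm is at most $c\bigl(\sum_i w_i\bar C_i+\sum_i\sum_p e_{ip}\bar x_{ip}\bigr)\le c(1+\epsilon)\cdot\mathrm{OPT}$ for an absolute constant $c$, which proves the theorem. A deterministic algorithm with the same guarantee follows by the method of conditional expectations, applied to the $O(n)$ distinct values that $\alpha$ can usefully take and to the per-job pair choices.
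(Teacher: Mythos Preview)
Your plan is sound and does yield an $O(1)$-approximation, but the mechanics differ from the paper's in ways worth noting. The paper works with a fixed discrete speed set $\{\sigma_1,\dots,\sigma_m\}$ and an interval-indexed LP in variables $x_{ijt}$ (job $i$ completes in interval $I_t$ at speed $\sigma_j$); it then takes a single deterministic $\alpha=\sqrt{2}-1$ and assigns each job a deterministic \emph{$\alpha$-speed}, namely the harmonic mean $s_i^\alpha=\bigl(\sum_j \mu_{ij}/\sigma_j\bigr)^{-1}$ of a pmf on speeds read off the truncated LP solution. The energy bound then comes from Jensen's inequality applied to the lower convex envelope of $\mathcal{E}_i$ (justified because speed may vary within a job), and the final ratio is the explicit $(3+2\sqrt{2})(1+\epsilon)(1+\delta)$. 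Your route instead discretizes processing-time budgets into Pareto pairs, couples a preemptive time-indexed relaxation with per-job configuration variables $x_{ip}$, and rounds by independently sampling one pair per job. What this buys you is a factor-$1$ match between expected energy and LP energy, with no Jensen and no convex-envelope step; what the paper's deterministic $\alpha$-speed device buys is an explicit constant and a directly deterministic algorithm. One small caution: your line ``averaging over $\alpha$ with the chosen density turns $C_i^\alpha$ into a constant times $\bar C_i$'' glosses over the fact that the quantity you must average is $(1+1/\alpha)\,C_i^\alpha$, and with release dates $C_i^\alpha\ge r_i>0$, so the uniform density makes the $1/\alpha$ term diverge; you need either a density with enough mass bounded away from~$0$ or, more simply, to follow the paper and fix $\alpha$ (e.g.\ $\alpha=\tfrac12$ or $\alpha=\sqrt{2}-1$), which already suffices for the claimed $O(1)$.
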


\begin{theorem}
Given $n$ jobs with precedence constraints and deadlines and a general non-negative energy cost function, there is an $O(1)$-approximation algorithm for the
problem of non-preemptively minimizing a convex combination of weighted tardiness and energy cost.
\label{th:general_wT}
\end{theorem}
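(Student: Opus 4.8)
The plan is to reduce the weighted tardiness problem to the weighted completion time problem treated in Theorem~\ref{th:general_wC}, using the standard observation that tardiness is a truncated, shifted version of completion time, and then apply an interval-based speed-scaling and $\alpha$-point rounding scheme analogous to the one used for Theorem~\ref{th:general_wC}. First I would set up a convex relaxation: guess (within a constant factor, by geometric search) the optimal objective value, and for each job introduce a "virtual release date" equal to its deadline $d_i$, since no job contributes to the tardiness objective before $d_i$. One then writes an interval-indexed linear (or convex) program in which the decision variables encode, for each job and each geometrically-spaced time interval, the fraction of the job processed in that interval together with an associated energy expenditure; the regularity condition on the energy cost functions (controlling their rate of growth, as promised in the introduction) is exactly what lets us discretize speeds into $O(1)$ many representative values per interval while losing only a constant factor in energy cost.

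The core algorithmic step, which I would call \algdue{} in the spirit of the paper's naming, is: solve the relaxation, for each job $i$ pick a random threshold $\alpha\in(0,1)$ and let the $\alpha$-point be the first moment at which an $\alpha$-fraction of $i$'s (fractional) volume has been scheduled; assign job $i$ to the interval containing its $\alpha$-point, run it at the "$\alpha$-speed" dictated by that interval, and then sequence jobs within and across intervals consistently with precedence constraints (e.g., in order of $\alpha$-points, respecting the DAG by a topological tie-break). The analysis then has two halves. For the energy term, the argument is essentially the same as in Theorem~\ref{th:general_wC}: the speed chosen for each job is within a constant of the speed the relaxation pays for, and convexity/monotonicity plus the growth bound give $O(1)$ energy. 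For the scheduling term, the key lemma is that the completion time of job $i$ in the constructed schedule is $O(1)$ times a quantity of the form $\max\{d_i,\ \text{(fractional $\alpha$-point-based bound)}\}$; subtracting $d_i$ and truncating at $0$ only helps, so $T_i = \max\{0,C_i-d_i\}$ is bounded by $O(1)$ times a term that the relaxation's objective already charges for. Summing over $i$ and combining with the energy bound yields the constant-factor guarantee.

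The main obstacle will be the scheduling (tardiness) half of the analysis, for two reasons. Tardiness is nonlinear in $C_i$, so the usual trick of charging $w_iC_i$ against the LP fractional completion time does not transfer verbatim — one has to be careful that the relaxation is written so that it lower-bounds $\sum_i w_i\max\{0,C_i-d_i\}$ rather than $\sum_i w_iC_i$, which is why introducing $d_i$ as a virtual release date is essential rather than cosmetic. Second, because jobs have genuine deadlines and precedence constraints simultaneously, one must ensure the rounded schedule does not pile up volume in a way that blows up the makespan of some sub-chain; this is handled by the interval structure (each interval has enough "capacity" because the relaxation fractionally fit the work there) together with a charging argument that delays incurred by waiting for predecessors are absorbed into the predecessors' own $\alpha$-point bounds. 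Once those are in place, the geometric guessing of the optimum and the discretization of speeds contribute only constant factors, and the theorem follows.
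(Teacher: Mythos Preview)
Your proposal contains a genuine gap in the tardiness half of the analysis, and the gap is exactly the crux of the problem.

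You assert that the rounded completion time satisfies $C_i \le O(1)\cdot\max\{d_i,\ B_i\}$ for some fractional bound $B_i$, and then say ``subtracting $d_i$ and truncating at $0$ only helps.'' But the map $x\mapsto (x-d_i)^+$ does \emph{not} preserve multiplicative approximation. Concretely: suppose the relaxation completes job $i$ exactly at $d_i$, so its fractional tardiness is $0$. Your bound only guarantees $C_i\le c\cdot d_i$ for some constant $c>1$, hence $T_i$ can be as large as $(c-1)d_i$. There is nothing in the LP objective to charge this to, so the ratio is unbounded. The ``virtual release date $=d_i$'' device does not rescue this; if anything it pushes in the wrong direction, and the paper explicitly notes that the tardiness algorithm \emph{fails} in the presence of release dates precisely because release dates obstruct keeping zero-tardiness jobs at zero tardiness.

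The paper's actual mechanism is different and is the missing idea. After computing the $\alpha$-speeds $s_i^\alpha$ exactly as in the completion-time algorithm, one \emph{scales every speed up} by the fixed factor $\gamma=\frac{1+\epsilon}{\alpha(1-\alpha)}$ (and then rounds up, not down). With no release dates this yields $\tilde C_i^\alpha = \gamma^{-1}C_i^\alpha \le \bar C_i$, i.e.\ the algorithm's completion time is at most the fractional completion time itself, not a constant multiple of it. Hence $(\tilde C_i^\alpha - d_i)^+ \le (\bar C_i - d_i)^+$ with constant $1$, and zero-tardiness jobs stay at zero. The entire constant-factor loss is pushed into the energy term via $\mathcal{E}_i(\gamma s_i^\alpha)$, and \emph{this} is where the regularity condition on the energy functions is used (the paper's Assumption~\ref{cd:speed_up}: $\mathcal{E}_i(\gamma s)\le \gamma^{\beta-1}\mathcal{E}_i(s)$), not for speed discretization as you suggest. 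The resulting ratio is $4^{\beta}(1+\epsilon)^{\beta-1}(1+\delta)^{\beta-1}$, obtained at $\alpha=\tfrac12$.
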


The constants in the $O(1)$ are modest. Consider the case where we are given a set of speeds $\vk{S}=\set{\mspd_1,\ldots,\mspd_m}$, at which the machine can run, with $\mspd_j\leq(1+\delta)\mspd_{j-1}$, and some $\epsilon>0$. Then the algorithm for the weighted completion time setting has a $4(1+\epsilon)(1+\delta)$-approximation ratio when only precedence constraints exist, and $(3+2\sqrt{2})(1+\epsilon)(1+\delta)$-approximation ratio when release dates are added.  The algorithm for the weighted tardiness setting has a $4^\pw(1+\epsilon)^{\pw-1}(1+\delta)^{\pw-1}$-approximation ratio even with arbitrary precedence constraints, where $\pw$ controls the growth of the energy cost function.

\subsection{Our Methodology}
The problem of minimizing weighted completion time in the combinatorial setting has been well-studied. The work of Phillips, Stein, and Wein \cite{Phillips1998} and Hall, Schulz, Shmoys, and Wein \cite{Hall1997, Hall1997a} introduced the idea of {\em $\alpha$-points}, and these  have been used in much of the subsequent work.
The idea is that one first formulates a time-indexed integer program in which decision variable $x_{it}$ is 1 if job $i$ completes at time $t$, and then solves its linear programming relaxation. From the solution to the relaxation, one computes the $\alpha$-point of each job, that is, the earliest time at which an $\alpha$ fraction of the job has completed in the relaxation. The exact interpretation of when an $\alpha$ fraction completes depends upon the particular problem. One uses these $\alpha$-points to infer an order on the jobs and then runs the jobs non-preemptively, respecting that order. There are many variants and extensions of these technique including choosing $\alpha$ randomly \cite{Chekuri2001, Goemans1997} or choosing a different $\alpha$ for each job \cite{Goemans2002}.  This technique has led to small constant factor approximation algorithms for many weighted completion time scheduling problems \cite{Skutella2006}.

The time-indexed integer program (IP) formulations for this problem are not typically of polynomial size. However, the {\em interval-indexed} IP, introduced in \cite{Hall1997a}, in which time is divided into geometrically increasing intervals and jobs are assigned to intervals rather than individual time slots, is of polynomial size. By using this linear program one obtains a polynomial sized linear program from which it is still possible to apply the ideas of $\alpha$-points while suffering only a small additional degradation of the approximation ratio.

In this paper, we extend the interval-indexed IP to handle speed scaling and then design new $\alpha$-point based rounding algorithms to obtain the resulting schedules. In doing so we introduce the new concept of {\em $\alpha$-speeds}. We assume, in Sections \ref{sec:formulation}, \ref{sec:approx_alg}, and \ref{sec:tardiness}, that we have a discrete set of $m$ allowable speeds $\vk{S} = \set{\mspd_j}$, and that the rate of power consumption is a polynomial function of the speed. In Section \ref{sec:extension_cspeeds} we describe how to remove these assumptions. Although the time-indexed IPs are easier to explain, due to limited space, we will describe only the interval-indexed linear programs in this paper. In our interval-indexed IP, a variable $x_{ijt}$ is 1 if job $i$ runs at speed $\mspd_j$ and completes in interval $t$. We can then extend the standard interval-indexed integer programming formulation to take the extra dimension of speed into account (see Section \ref{sec:formulation} for details). Once we have solved its linear program (LP) relaxation, we need to now determine {\em both an $\alpha$-point and $\alpha$-speed}. The key insight is that by ``summarizing" each dimension appropriately, we are able to make the correct choice for the other dimension. At a high level, we first choose the $\alpha$-point by ``collapsing" all pieces of a job that complete in the LP in interval $t$ (these pieces have different speeds), being especially careful with the last interval, where we may have to choose only some of the speeds. We then use {\em only} the pieces of the job that complete before the $\alpha$-point to choose the speed, where the speed is chosen by collapsing the time dimension and  then interpreting the result as a probability mass function (pmf), where the probability that the job is run at speed $\mspd_j$ depends on  the total amount of processing done at that speed. We then define the concept of $\alpha$-speeds, which is related to the expected value under this pmf, and run the job at this speed (see Section \ref{sec:approx_alg} for more details). We combine this new rounding method with extensions of the more traditional methods for dealing with precedence constraints and release dates to obtain our algorithms.

For weighted tardiness, we emphasize again that not much is known about approximating this problem, even in the absence of energy concerns. For this problem, we are able to use the same interval-indexed linear program, with the objective function modified to tardiness. Because the linear program is interval indexed, the non-linear objective function is not a problem. After the solving the linear program, we are able to show that with only a constant factor increase in energy (over the lower bound from the linear program), we obtain only a constant factor (over the linear program) increase in tardiness. Implicit in this analysis is the fact that jobs that receive $0$ tardiness in the linear program will receive $0$ tardiness in our solution; in some sense the speed scaling makes accomplishing this easier than in the combinatorial setting. We note that our weighted tardiness algorithms does not work in the presence of release dates, as release dates may stop us from being able to keep jobs with $0$ tardiness in the LP at $0$ tardiness in the  schedule.

Finally, in Section~\ref{sec:extension}, we show how to extend our results for the weighted completion and weighted tardiness scheduling metrics to general energy cost functions. We also show how to extend our results to the setting where continuous speeds are used and not just a discrete set $\vk{S}$, while maintaining the same approximation ratio.

\section{Problem Formulation}
\label{sec:formulation}

\subsection{Problem Setting}
We are given $n$ jobs, where job $i$ has a processing requirement of $\rho_i\in\mathbb{N}_{+}$ machine cycles, release time $r_i$, and an associated positive weight $w_i$. Let $\spd_i$ denote the speed at which job $i$ runs on the machine and $C_i$ denote its completion time. Let $\p{} = \{\p{1}, \ldots,\p{n}\}$ denote the order in which the jobs are processed, i.e. $\p{k} = i$ implies that job $i$ is the $k$-th job to be processed. Then $C_{\p{i}} = \max\{r_{\p{i}}, C_{\p{i-1}}\} + \frac{\rho_{\p{i}}}{\spd_{\p{i}}}$ is the completion time of the $i$-th job to be processed, with $C_{\p{0}} = 0$. We do not allow preemption.

Let $\vk{S}=\set{\mspd_1, \ldots, \mspd_m}$, be the set of possible speeds at which the machine can run. We will assume that $\mspd_{j+1} \leq (1+\delta)\mspd_{j}$, for some $\delta>0$. This is a natural assumption because actual speed scaling achieved in CPUs is done via frequency multipliers or dividers. Although a discrete set of speeds is probably the most common case for CPUs, in Section \ref{sec:extension_cspeeds} we show that our algorithm has the same approximation ratio when a continuous set of speeds is used.

Let $E_i(\spd_i)$ denote the energy cost of running job $i$ at speed $\spd_i$. For simplicity we initially consider $E_i(\spd_i) = v_i\rho_i\spd_i^{\pw-1}$, where $\pw\geq 2$ and $v_i$ are known constants. As indicated earlier, an energy cost function of this form is the standard model for these problems, although our model is more general because the energy cost function is job-dependent. In Section \ref{sec:extension} we show that our algorithms also work for a much larger class of job-dependent energy cost functions.

The objective is to compute a feasible schedule $(\p{}, \vk{C})$, consisting of an order $\p{}$ and completion times $\vk{C}$, possibly subject to \textit{}precedence and/or release date constraints, and the vector of job speeds $\vk{\spd}=\{\spd_1,\ldots,\spd_n\}\in\mathbb{R}_{+}^n$ that minimizes the total cost,
\begin{eqnarray}
f(\p{},\vk{\spd}) = \sum_{i=1}^n \left[v_i \rho_i \spd_i^{\pw-1} + w_{\p{i}} C_{\p{i}}\right],
\label{eq:general_costfunction}
\end{eqnarray}

Since this function is convex we can assume, w.l.o.g., that each job runs at a constant speed.

For convenience we will use an extended version of the notation of Graham et al. \cite{Graham1979} to refer to the different energy aware scheduling problems, i.e. \ect{r_i,prec}, will refer to the problem setting with $1$ machine, with $r_i$ release dates, precedence constraints, and the weighted completion time as the scheduling performance metric. Similarly, the \etr{r_i,prec} will refer to the same setting, but with tardiness as the scheduling performance metric. In all of them $E_i(\spd_i)$ indicates that the energy cost is also added as a performance metric.

\subsection{Interval-Indexed Formulation}
We now modify and extend the interval-indexed formulation proposed by Hall et al. \cite{Hall1997a} to accommodate speeds and energy cost.

The interval-indexed formulation divides the time horizon into geometrically increasing intervals, and the completion time of each job is assigned to one of these intervals. Since the completion times are not associated to a specific time, the completion times are not precisely known but are lower bounded. By controlling the growth of each interval one can obtain a sufficiently tight bound.

The problem formulation is as follows. We divide the time horizon into the following geometrically increasing intervals: $[\kappa, \kappa]$, $(\kappa, (1+\epsilon)\kappa]$, $((1+\epsilon)\kappa, (1+\epsilon)^2\kappa]$, $\ldots$, where $\epsilon >0$ is an arbitrary small constant, and $\kappa = \frac{\rho_{\min}}{\mspd_{\max}}$ denotes the smallest interval size that will hold at least one whole job. We define interval $I_t = (\tau_{t-1}, \tau_t]$, with $\tau_0 = \kappa$ and $\tau_t = \kappa(1 + \epsilon)^{t-1}$. The interval index ranges over $\{1, \ldots, T\}$, with $T = \min\{\lceil t\rceil: \kappa(1+\epsilon)^{t-1} \geq \max_{i=1}^n r_i + \sum_{i=1}^n \frac{\rho_i}{\mspd_1}\}$; and thus, we have a polynomial number of indices $t$.

Let
\begin{eqnarray}
x_{ijt} = \left\{ 
\begin{array}{l l}
1, & \quad \mbox{if job $i$ runs at a speed $\mspd_j$ and completes in the time interval $I_t = (\tau_{t-1}, \tau_t]$}\\
0, & \quad \mbox{otherwise}\\ \end{array}
\right..
\label{eq:x_def}
\end{eqnarray}

By using the lower bounds $\tau_{t-1}$ of each time interval $I_t$, a lower bound to (\ref{eq:general_costfunction}) is written as,
\begin{eqnarray}
\min_{\vk{x}} \sum_{i=1}^n\sum_{j=1}^m\sum_{t=1}^T \left(v_i\rho_i\mspd_j^{\pw-1} + w_i\tau_{t-1}\right)x_{ijt}.
\label{eq:II_objective}
\end{eqnarray}

The following are the constraints required for the \ect{r_i, prec}~problem:
\begin{enumerate}
	\citem Each job must finish in a unique time interval and speed; therefore for $i = \{1,\ldots,n\}$:
	\begin{eqnarray}
	\sum_{j=1}^m\sum_{t=1}^T x_{ijt} = 1.
	\label{eq:II_ctr1}
	\end{eqnarray}\vspace{-.5cm}
	
	\citem Since only one job can be processed at any given time, the total processing time of jobs up to time interval $I_t$ must be at most $\tau_t$ units. Thus, for $t = \{1,\ldots,T\}$:
	\begin{eqnarray}
	\sum_{i=1}^n\sum_{j=1}^m\sum_{u=1}^t \frac{\rho_i}{\mspd_j} x_{iju} \leq \tau_t.
	\label{eq:II_ctr2}
	\end{eqnarray}\vspace{-.5cm}
	
	\citem Job $i$ running at speed $\mspd_j$ requires $\frac{\rho_i}{\mspd_j}$ time units to be processed, and considering that its release time is $r_i$, then for $i = \{1,\ldots,n\}$, $j = \{1,\ldots,m\}$, and $t = \{1,\ldots,T\}$:
	\begin{eqnarray}
	x_{ijt} = 0, ~~\mathrm{if} ~ \tau_t < r_i + \frac{\rho_i}{\mspd_j}.
	\label{eq:II_ctr4}
	\end{eqnarray}\vspace{-.5cm}
		
	\citem For $i = \{1,\ldots,n\}$ and $t = \{1,\ldots,T\}$:
	\begin{eqnarray}
	x_{it} \in \set{0, 1}.
	\label{eq:II_ctrbin}
	\end{eqnarray}\vspace{-.5cm}
	
	\citem The precedence constraint $i_1\prec i_2$ implies that job $i_2$ cannot finish in an interval earlier than $i_1$. Therefore for every $i_1\prec i_2$ constraint we have that for $t = \{1,\ldots,T\}$:
	\begin{eqnarray}
	\sum_{j=1}^m\sum_{u=1}^t x_{i_1ju} \geq \sum_{j=1}^m\sum_{u=1}^t x_{i_2ju}.
	\label{eq:II_ctr5}
	\end{eqnarray}\vspace{-.3cm}
\end{enumerate}

It is important to note that this integer program only provides a lower bound for (\ref{eq:general_costfunction}); in fact its optimal solution may not be schedulable, since constraints (\ref{eq:II_ctr2}) do not imply that only one job can be processed at a single time, they only bound the total amount of work in $\cup_t I_t$.

\section{Approximation Algorithm for Weighted Completion Time}
\label{sec:approx_alg}

We now describe the approximation algorithm for the weighted completion time, called \proc{\algIlong} (\algI) which is displayed in Figure \ref{alg:saias}.
\begin{figure}[b!]
	\centering
	\algbox{
		\begin{codebox}	
			\ProcStart{\algIlong~(\algI)}
			\zi {\bf Inputs: } set of jobs, $\alpha \in (0,1)$, $\epsilon > 0$, set of speeds $\vk{S} = \{\mspd_1, \ldots, \mspd_m\}$.
			\li Compute an optimal solution $\vk{\xb}$ to the linear relaxation (\ref{eq:II_objective})-(\ref{eq:II_ctr5}).
			\li Compute the $\alpha$-intervals $\vk{\tau^{\alpha}}$ and the sets $J_t$.
			\li Compute an order $\p{}^{\alpha}$ that has the sets $J_t$ ordered in non-decreasing values of $t$ and the\\ jobs within each set in a manner consistent with the precedence constraints.
			\li Compute the $\alpha$-speeds $\vk{\spd^\alpha}$
			\li Round down each $\spd_i^{\alpha}$ to the nearest speed in $\vk{S}$ and run job $i$ at this rounded speed, $\bar{\spd}_i^{\alpha}$.
			\li Set the $i$-th job to start at time $\max\{r_{\p{i}}, \bar{C}_{\p{i-1}}^\alpha\}$, where $\bar{C}_{\p{i-1}}^\alpha$ is the completion\\ time of the previous job using the rounded $\alpha$-speeds, and $\bar{C}_{\p{0}}^\alpha = 0$.
			\li \Return speeds $\bar{\vk{\spd}}^{\alpha}$ and schedule $(\p{}^{\alpha}, \bar{\vk{C}}^{\alpha})$.
		\end{codebox}
	}
	\caption{\algIlong}
	\label{alg:saias}
\end{figure}

Let $\xb_{ijt}$ denote the optimal solution of the linear relaxation of the integer program (\ref{eq:II_objective})-(\ref{eq:II_ctr5}), in which we change constraints (\ref{eq:II_ctrbin}) for $x_{ijt}\geq 0$. In step 1 of the algorithm we compute the optimal solution $\vk{\xb}$ and in step 2, given $0\leq\alpha\leq 1$, we compute the $\alpha$-interval of job $i$, which is defined as,
\begin{eqnarray}
\tau_i^\alpha = \min\left\{\tau: \sum_{j=1}^m\sum_{u=1}^\tau \xb_{iju} \geq \alpha\right\}.
\label{eq:alphaint_def}
\end{eqnarray}

Since several jobs may finish in the same interval, let $J_t$ denote the set of jobs that finish in interval $I_t$, $J_t = \{i: \tau_i^\alpha = t\}$, and we use these sets to determine the order $\p{}^{\alpha}$ as described in step 3.

Next, in step 4, we compute the $\alpha$-speeds as follows. Since $\sum_{j=1}^m\sum_{u=1}^{\tau_i^{\alpha}} \xb_{iju} \geq \alpha$, we define auxiliary variable $\{\xt_{ijt}\}$ as:
\begin{eqnarray}
\tilde{x}_{ijt} = \left\{ 
\begin{array}{l l}
\xb_{ijt}, & t < \tau_i^{\alpha}\\
\max\left\{\min\left\{\xb_{ij\tau_i^{\alpha}}, \alpha - \sum_{l=1}^{j-1}\xb_{il\tau_i^{\alpha}} - \beta_i\right\},0\right\}, & t = \tau_i^{\alpha}\\
0, & t > \tau_i^{\alpha}\\\end{array}
\right.,
\label{eq:xtilde_def}
\end{eqnarray}
where $\beta_i = \sum_{j=1}^m\sum_{u=1}^{\tau_i^{\alpha}-1} \xb_{iju} < \alpha$. Note that with this auxiliary variable $\sum_{j=1}^m\sum_{u=1}^{\tau_i^{\alpha}} \xt_{iju} = \alpha$. This is a key step that allows us to truncate the fractional solution so that for every job $i$, the sum of $\tilde{x}_{ijt}$ up to time interval $\tau_i^{\alpha}$ for each speed $j$ can be interpreted as a probability mass function. We define this probability mass function (pmf) $\mu_i = (\mu_{i1}, \dots, \mu_{im})$ on the set of speeds $\vk{S} = \{\mspd_1, \ldots, \mspd_m\}$ as
\begin{eqnarray}
\mu_{ij} = \frac{1}{\alpha}\sum_{u=1}^{\tau_i^{\alpha}}\tilde{x}_{iju}.
\label{eq:II_mu_def}
\end{eqnarray}

Let $\hat{\spd}_i$ define a random variable distributed according to the pmf $\mu_i$, i.e. $\mu_{ij} = \mathbb{P}(\hat{\spd}_i = \mspd_j)$. Then, the $\alpha$-speed of job $i$, $\spd_i^{\alpha}$, is defined as follows:
\begin{eqnarray}
\frac{1}{\spd_i^{\alpha}} = \mathbb{E} \left[\frac{1}{\hat{\spd}_i}\right] = \sum_{j=1}^m \frac{\mu_{ij}}{\mspd_j} \Rightarrow \spd_i^{\alpha} = \frac{1}{\mathbb{E} \left[\frac{1}{\hat{\spd}_i}\right]}.
\label{eq:s_ialpha}
\end{eqnarray}

We define the $\alpha$-speeds using the reciprocal of the speeds since the completion times are proportional to the reciprocals instead of the speeds, and we need to bound completion times in the analysis of the algorithm.

Next, in step 5, because the $\alpha$-speeds $\spd_i^{\alpha}$ do not necessarily belong to the set of possible speeds $\vk{S}$ we round them down to $\bar{\spd}_i^{\alpha}$, which is the nearest speed in the set such that $\bar{\spd}_i^{\alpha} \leq \spd_i^{\alpha}$. The following lemma bounds the error introduced by this rounding.

\begin{lemma}
	The cost of the solution with the rounded down speeds $\bar{\vk{\spd}}^{\alpha}$ is at most $(1+\delta)$ times the cost of the solution using the $\alpha$-speeds $\vk{\spd}^{\alpha}$.
	\label{lm:speeddown_bound}
\end{lemma}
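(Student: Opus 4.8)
The plan is to split the cost function into its two parts — the energy term $\sum_i v_i\rho_i (\bar{\spd}_i^\alpha)^{\pw-1}$ and the weighted completion time term $\sum_i w_{\p{i}} \bar C_{\p{i}}^\alpha$ — and argue that each separately increases by at most a factor of $(1+\delta)$ when we pass from the $\alpha$-speeds $\spd_i^\alpha$ to the rounded speeds $\bar{\spd}_i^\alpha$. The single fact doing all the work is that, by the assumption $\mspd_{j+1}\le (1+\delta)\mspd_j$, any $\spd_i^\alpha$ lying in the range of $\vk{S}$ satisfies $\bar{\spd}_i^\alpha \ge \spd_i^\alpha/(1+\delta)$, equivalently $\spd_i^\alpha/\bar{\spd}_i^\alpha \le 1+\delta$. (One should first observe that $\spd_i^\alpha$ does lie in $[\mspd_1,\mspd_m]$, since it is defined in \eqref{eq:s_ialpha} as the harmonic-type mean $1/\mathbb{E}[1/\hat{\spd}_i]$ of speeds drawn from $\vk{S}$, hence it is between the smallest and largest speed in the support of $\mu_i$; so rounding down is well-defined.)

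For the completion time term, note that since $\bar{\spd}_i^\alpha \le \spd_i^\alpha$, rounding down only decreases speeds, hence only increases the per-job processing times $\rho_i/\bar{\spd}_i^\alpha$, and — because the order $\p{}^\alpha$ is fixed in step 3 and the completion times are computed by the recursion $\bar C_{\p{i}}^\alpha = \max\{r_{\p{i}},\bar C_{\p{i-1}}^\alpha\}+\rho_{\p{i}}/\bar{\spd}_{\p{i}}^\alpha$ — a pointwise increase in processing times gives a pointwise increase in every completion time. Quantitatively, I would show by induction on $i$ (in the order $\p{}^\alpha$) that $\bar C_{\p{i}}^\alpha \le (1+\delta)\, C_{\p{i}}^\alpha$, where $C_{\p{i}}^\alpha$ denotes the completion time using the unrounded $\alpha$-speeds: in the inductive step, $\bar C_{\p{i}}^\alpha = \max\{r_{\p{i}},\bar C_{\p{i-1}}^\alpha\} + \rho_{\p{i}}/\bar{\spd}_{\p{i}}^\alpha \le (1+\delta)\max\{r_{\p{i}},C_{\p{i-1}}^\alpha\} + (1+\delta)\rho_{\p{i}}/\spd_{\p{i}}^\alpha = (1+\delta)C_{\p{i}}^\alpha$, using the induction hypothesis on the $\max$ term (and $r_{\p{i}}\le(1+\delta)r_{\p{i}}$) and the speed-ratio bound on the processing term. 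Multiplying by $w_{\p{i}}$ and summing gives the $(1+\delta)$ factor on the whole weighted completion time.

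For the energy term the argument is even more direct: since $\bar{\spd}_i^\alpha \le \spd_i^\alpha$ and $\pw-1\ge 1>0$, we actually have $v_i\rho_i(\bar{\spd}_i^\alpha)^{\pw-1}\le v_i\rho_i(\spd_i^\alpha)^{\pw-1}$, so the energy cost does not increase at all when rounding down — a fortiori it is at most $(1+\delta)$ times the original. Combining the two bounds, the total cost of the rounded solution is at most $(1+\delta)$ times the total cost of the $\alpha$-speed solution, which is the claim.

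The calculations here are all routine; the only place requiring any care — and the one I would flag as the ``main obstacle,'' though it is minor — is verifying that $\spd_i^\alpha\in[\mspd_1,\mspd_m]$ so that the rounding in step 5 produces a speed that is genuinely within a $(1+\delta)$ factor (the bound $\mspd_{j+1}\le(1+\delta)\mspd_j$ only controls gaps strictly inside the set, so one must rule out $\spd_i^\alpha$ falling below $\mspd_1$). This follows because $\mu_i$ is supported on $\vk{S}$ and $1/\spd_i^\alpha$ is a convex combination of the values $1/\mspd_j$, placing $1/\spd_i^\alpha$ — and hence $\spd_i^\alpha$ — between $\min_j\mspd_j$ and $\max_j\mspd_j$.
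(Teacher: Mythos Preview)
Your proposal is correct and follows essentially the same route as the paper: split the cost into energy and weighted completion time, observe that rounding down can only decrease the (monotone) energy term, and use the recursion $\bar C_{\p{i}}^\alpha=\max\{r_{\p{i}},\bar C_{\p{i-1}}^\alpha\}+\rho_{\p{i}}/\bar{\spd}_{\p{i}}^\alpha$ together with $\spd_i^\alpha/\bar{\spd}_i^\alpha\le 1+\delta$ to get $\bar C_{\p{i}}^\alpha\le(1+\delta)C_{\p{i}}^\alpha$ inductively. The paper's proof compresses the induction into a single displayed inequality and does not pause to justify that $\spd_i^\alpha\in[\mspd_1,\mspd_m]$; your explicit harmonic-mean observation filling that gap is a small but genuine improvement in rigor over the original.
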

\begin{proof}
	The energy cost function $E_i(\spd_i)$ is increasing so rounding down does not increase the energy cost, but the completion time is now larger. Let $C_i^{\alpha}$ be the completion time of job $i$ when the speeds $\vk{\spd}^{\alpha}$ are used and $\bar{C}_i^{\alpha}$ when the rounded ones $\bar{\vk{\spd}}^{\alpha}$ are used. Since the speeds are reduced at most by $(1+\delta)$, then $(1+\delta)\bar{\spd_i}^{\alpha} \geq \spd_i^{\alpha}$, and we have that,
	\begin{eqnarray}
	\bar{C}_i^{\alpha} = \max\{r_i, \bar{C}_{i-1}^{\alpha}\} + \frac{\rho}{\bar{\spd}_i^{\alpha}} \leq (1+\delta)\left(\max\{r_i, C_{i-1}^{\alpha}\} + \frac{\rho}{\spd_i^{\alpha}}\right) = (1+\delta)C_i^{\alpha},
	\end{eqnarray}
	which implies that $\sum_{i=1}^nw_i\bar{C}_i^{\alpha} \leq (1+\delta)\sum_{i=1}^nw_i C_i^{\alpha}$ and proves the lemma.
\end{proof}

Finally, in steps 6 and 7 we compute the completion times given the calculated speeds and return the set of speeds $\bar{\vk{\spd}}^{\alpha}$ and the schedule $(\p{}^{\alpha}, \bar{\vk{C}}^{\alpha})$.

We now analyse this algorithm's performance for different energy aware scheduling problems. In the following subsections we will assume w.l.o.g. that $\tau_1^\alpha \leq \tau_2^\alpha \leq \ldots \tau_n^\alpha$.

\subsection{Single Machine Problem with Precedence Constraints}

We first need to prove that the output of the \algI~algorithm is indeed feasible.
\begin{lemma}
	If $i_1\prec i_2$, then constraint (\ref{eq:II_ctr5}) implies that $\tau_{i_1}^\alpha \leq \tau_{i_2}^\alpha$.
	\label{lm:II_prec_ctr}
\end{lemma}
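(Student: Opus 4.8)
The plan is to unwind the definition of the $\alpha$-interval in terms of cumulative sums of the LP solution and observe that the precedence constraint~(\ref{eq:II_ctr5}) says exactly that the cumulative mass of $i_1$ pointwise dominates that of $i_2$, so the threshold $\alpha$ is reached no later for $i_1$ than for $i_2$.

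First I would recall from~(\ref{eq:alphaint_def}) that $\tau_i^\alpha$ is the smallest index $\tau$ for which the partial sum $\sum_{j=1}^m\sum_{u=1}^{\tau}\xb_{iju}$ reaches $\alpha$; by constraint~(\ref{eq:II_ctr1}) the total mass assigned to each job is $1\ge\alpha$, so this minimum is well defined. Evaluating the partial sum for $i=i_2$ at $\tau=\tau_{i_2}^\alpha$ gives, directly from the definition, $\sum_{j=1}^m\sum_{u=1}^{\tau_{i_2}^\alpha}\xb_{i_2ju}\ge\alpha$.

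Next I would invoke constraint~(\ref{eq:II_ctr5}) for the pair $i_1\prec i_2$ at the particular index $t=\tau_{i_2}^\alpha$, which yields
\[
\sum_{j=1}^m\sum_{u=1}^{\tau_{i_2}^\alpha}\xb_{i_1ju}\ \ge\ \sum_{j=1}^m\sum_{u=1}^{\tau_{i_2}^\alpha}\xb_{i_2ju}\ \ge\ \alpha .
\]
Thus $\tau_{i_2}^\alpha$ belongs to the set of indices over which the minimum in~(\ref{eq:alphaint_def}) for job $i_1$ is taken, and therefore $\tau_{i_1}^\alpha\le\tau_{i_2}^\alpha$, which is the claim.

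There is essentially no obstacle here: the proof is a one-line consequence of the definitions. The only points requiring a moment's care are that the minimum defining $\tau_i^\alpha$ is over a nonempty set (guaranteed by~(\ref{eq:II_ctr1})) and that~(\ref{eq:II_ctr5}) is asserted for \emph{every} $t\in\{1,\dots,T\}$, in particular for $t=\tau_{i_2}^\alpha$.
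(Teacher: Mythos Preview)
Your proposal is correct and follows essentially the same argument as the paper: evaluate constraint~(\ref{eq:II_ctr5}) at $t=\tau_{i_2}^\alpha$, combine with the defining inequality $\sum_{j,u\le\tau_{i_2}^\alpha}\xb_{i_2ju}\ge\alpha$, and conclude that $\tau_{i_2}^\alpha$ is a feasible index in the minimum defining $\tau_{i_1}^\alpha$. The only addition you make is the explicit observation that the minimum is well defined via~(\ref{eq:II_ctr1}), which is fine but not essential.
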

\begin{proof}
	Evaluating the LP constraint (\ref{eq:II_ctr5}) corresponding to $i_1\prec i_2$, for $t=\tau_{i_2}^\alpha$, we have that,
	\begin{eqnarray*}
	\sum_{j=1}^m\sum_{u=1}^{\tau_{i_2}^\alpha} x_{i_1ju} \geq \sum_{j=1}^m\sum_{u=1}^{\tau_{i_2}^\alpha} x_{i_2ju} \geq \alpha,
	\end{eqnarray*}
	where the last inequality follows from the definition of $\tau_{i_2}^\alpha$. The chain of inequalities implies that $\sum_{j=1}^m\sum_{u=1}^{\tau_{i_2}^\alpha} x_{i_1 j u} \geq \alpha$, so $\tau_{i_1}^\alpha \leq \tau_{i_2}^\alpha$.
\end{proof}

Since the \algI~algorithm schedules jobs by first ordering the sets $J_t$ in increasing order of $t$, and then orders the jobs within each set in a way that is consistent with the precedence constraints, by Lemma \ref{lm:II_prec_ctr} it follows that the \algI~algorithm preserves the precedence constraints, and, therefore, the output of the algorithm is feasible. Next, we can prove the following result.

\begin{theorem}
	The \algI~algorithm with $\alpha=\frac{1}{2}$ is a $4(1+\epsilon)(1 + \delta)$-approximation algorithm for the \ect{prec}~problem, with $E_i(\spd_i)=v_i\rho_i\spd_i^{\pw -1}$.
	\label{th:II_nc}
\end{theorem}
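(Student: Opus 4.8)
\noindent\emph{Proof plan.} The plan is to run the classical $\alpha$-point argument, extended to the extra speed dimension. First I would observe that the optimal value $\mathrm{LP}^*$ of the relaxation (\ref{eq:II_objective})--(\ref{eq:II_ctr5}) is a lower bound on the true optimum: every feasible schedule maps to a $0/1$-feasible point of (\ref{eq:II_objective})--(\ref{eq:II_ctr5}) of no larger objective, and dropping integrality only lowers the optimum, so $\mathrm{LP}^*\le\mathrm{OPT}$ for (\ref{eq:general_costfunction}). It then suffices to charge the algorithm's energy cost and its weighted completion time separately against the two parts of $\mathrm{LP}^*$, namely the energy part $\mathrm{E}^*=\sum_i v_i\rho_i\sum_{j,t}\xb_{ijt}\mspd_j^{\pw-1}$ and the completion part $\mathrm{W}^*=\sum_i w_i\sum_{j,t}\xb_{ijt}\tau_{t-1}$, each with loss at most $4(1+\epsilon)(1+\delta)$.

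\emph{Energy.} I would use $1/\spd_i^\alpha=\mathbb{E}[1/\hat{\spd}_i]$ from (\ref{eq:s_ialpha}) together with convexity of $x\mapsto x^{-(\pw-1)}$ on $(0,\infty)$ (valid since $\pw-1\ge 1$): Jensen gives $(\spd_i^\alpha)^{\pw-1}=(\mathbb{E}[\hat{\spd}_i^{-1}])^{-(\pw-1)}\le\mathbb{E}[\hat{\spd}_i^{\pw-1}]=\sum_j\mu_{ij}\mspd_j^{\pw-1}$. Substituting $\mu_{ij}=\frac1\alpha\sum_{u\le\tau_i^\alpha}\xt_{iju}$ and using $0\le\xt_{iju}\le\xb_{iju}$ (read off from (\ref{eq:xtilde_def})) bounds this by $\frac1\alpha\sum_{j,t}\xb_{ijt}\mspd_j^{\pw-1}$; multiplying by $v_i\rho_i$ and summing over $i$ shows the total energy under the $\alpha$-speeds, $\sum_i v_i\rho_i(\spd_i^\alpha)^{\pw-1}$, is at most $\frac1\alpha\mathrm{E}^*$.

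\emph{Weighted completion time.} Assuming $\tau_1^\alpha\le\cdots\le\tau_n^\alpha$, since there are no release dates here and the algorithm lists the sets $J_t$ in increasing $t$, the completion time of job $i$ under the (unrounded) $\alpha$-speeds is at most the total processing time of all jobs $l$ with $\tau_l^\alpha\le\tau_i^\alpha$, i.e. $C_i^\alpha\le\sum_{l:\tau_l^\alpha\le\tau_i^\alpha}\rho_l/\spd_l^\alpha$. Writing $\rho_l/\spd_l^\alpha=\frac1\alpha\sum_j\sum_{u\le\tau_l^\alpha}\xt_{lju}\rho_l/\mspd_j\le\frac1\alpha\sum_j\sum_{u\le\tau_i^\alpha}\xb_{lju}\rho_l/\mspd_j$ (using $\tau_l^\alpha\le\tau_i^\alpha$ and $\xt\le\xb$), summing over all $l$, and invoking (\ref{eq:II_ctr2}) at $t=\tau_i^\alpha$ gives $C_i^\alpha\le\frac1\alpha\tau_{\tau_i^\alpha}$. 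For the matching lower bound I would use the definition (\ref{eq:alphaint_def}) of $\tau_i^\alpha$, which yields $\sum_j\sum_{u\ge\tau_i^\alpha}\xb_{iju}>1-\alpha$, hence $\sum_{j,t}\xb_{ijt}\tau_{t-1}\ge\tau_{\tau_i^\alpha-1}(1-\alpha)=\frac{1-\alpha}{1+\epsilon}\tau_{\tau_i^\alpha}$, where the last equality uses $\tau_t=(1+\epsilon)\tau_{t-1}$ for $t\ge2$ (the boundary case $\tau_i^\alpha=1$ is immediate since then $\tau_{\tau_i^\alpha}=\kappa\le\tau_{t-1}$ for every $t$, so $\sum_{j,t}\xb_{ijt}\tau_{t-1}\ge\kappa$). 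Combining, $C_i^\alpha\le\frac{1+\epsilon}{\alpha(1-\alpha)}\sum_{j,t}\xb_{ijt}\tau_{t-1}$, and weighting by $w_i$ and summing gives $\sum_i w_iC_i^\alpha\le\frac{1+\epsilon}{\alpha(1-\alpha)}\mathrm{W}^*$.

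\emph{Conclusion and main obstacle.} Setting $\alpha=\frac12$ makes $\frac1\alpha=2$ and $\frac1{\alpha(1-\alpha)}=4$, so the $\alpha$-speed schedule has total cost at most $2\mathrm{E}^*+4(1+\epsilon)\mathrm{W}^*\le 4(1+\epsilon)(\mathrm{E}^*+\mathrm{W}^*)=4(1+\epsilon)\mathrm{LP}^*$; applying Lemma \ref{lm:speeddown_bound} to account for rounding the speeds down to $\bar{\spd}_i^\alpha$ in step 5 multiplies this by at most $(1+\delta)$, giving total cost $\le 4(1+\epsilon)(1+\delta)\mathrm{LP}^*\le 4(1+\epsilon)(1+\delta)\,\mathrm{OPT}$, as claimed. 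The step I expect to be trickiest is the bookkeeping around the last interval in (\ref{eq:xtilde_def}): one has to check that the truncation there makes $\mu_i$ an honest probability mass function (total mass exactly $\alpha$, which is what the nested $\max/\min$ clipping achieves) while still keeping $\xt\le\xb$, since this single inequality is precisely what simultaneously drives the energy bound (via Jensen) and the completion-time bound (via (\ref{eq:II_ctr2})); the tie-breaking among equal $\tau_i^\alpha$ values and the degenerate first interval $\tau_0=\tau_1=\kappa$ then need only the short separate checks noted above.
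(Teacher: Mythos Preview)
Your proposal is correct and follows essentially the same route as the paper: Jensen on $s\mapsto s^{-(\pw-1)}$ for the energy side, the packing constraint (\ref{eq:II_ctr2}) for $C_i^\alpha\le\frac{1}{\alpha}\tau_{\tau_i^\alpha}$, the mass-after-$\tau_i^\alpha$ argument for $\bar C_i\ge(1-\alpha)\tau_{\tau_i^\alpha-1}$, and Lemma~\ref{lm:speeddown_bound} for the final $(1+\delta)$. The one small difference is that you keep the sharper energy bound $\sum_i v_i\rho_i(\spd_i^\alpha)^{\pw-1}\le\frac{1}{\alpha}\mathrm{E}^*$, while the paper deliberately loosens it to $\frac{1+\epsilon}{\alpha(1-\alpha)}\mathrm{E}^*$ so that a single constant multiplies both parts; since the completion-time side already forces $\frac{1+\epsilon}{\alpha(1-\alpha)}=4(1+\epsilon)$ at $\alpha=\tfrac12$, the final ratio is the same either way.
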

\begin{proof}
	Let $x_{ijt}^*$ denote an optimal solution to the integer problem (\ref{eq:II_objective})-(\ref{eq:II_ctr5}), $\xb_{ijt}$ the fractional solution of its linear relaxation, and $\xt_{iju}$ the auxiliary variables calculated for the \algI~algorithm.
	
	Since in (\ref{eq:II_objective}) the completion time for jobs completed in interval $I_t$ is $\tau_{t-1}$, it follows that,
	\begin{eqnarray}
	\sum_{i=1}^n\sum_{j=1}^m\sum_{t=1}^T \left(v_i \rho_i \mspd_j^{\pw-1} + w_i \tau_{t-1}\right) \xb_{ijt}
	\leq \sum_{i=1}^n\sum_{j=1}^m\sum_{t=1}^T v_i \rho_i \mspd_j^{\pw-1} x_{ijt}^* + \sum_{i=1}^n w_i C_i^*.
	\label{eq:II_xopt_bound}
	\end{eqnarray}
	
	The energy terms of the algorithm's solution are bounded as follows,
	\begin{eqnarray}
	v_i \rho_i(\spd_i^\alpha)^{\pw-1} & = & v_i \rho_i \left(\frac{1}{\spd_i^\alpha}\right)^{-(\pw-1)} = v_i \rho_i \left(\mathbb{E}\left[\frac{1}{\hat{\spd}_i}\right]\right)^{-(\pw-1)}\nonumber\\
	& \leq & v_i \rho_i \mathbb{E}\left[\left(\frac{1}{\hat{\spd}_i}\right)^{-(\pw-1)}\right] = v_i \rho_i \mathbb{E}\left[\hat{\spd}_i^{\pw-1}\right] = v_i \rho_i \sum_{j=1}^m \mu_{ij} \mspd_j^{\pw-1},
	\label{eq:energy_terms}
	\end{eqnarray}
	where the inequality follows from Jensen's Inequality applied to the convex function $\frac{1}{s^{\pw-1}}$. Using the definition of $\mu_{ij}$ in (\ref{eq:II_mu_def}) and given that $0\leq \alpha \leq 1$, $\epsilon > 0$, and $\xt_{ijt}\leq\xb_{ijt}$, it follows that,
	\begin{eqnarray}
	v_i \rho_i (\spd_i^\alpha)^{\pw-1} \leq \frac{v_i\rho_i}{\alpha} \sum_{j=1}^m \sum_{u=1}^{\tau_i^\alpha} \mspd_j^{\pw-1}\xt_{iju} \leq \frac{(1+\epsilon)}{\alpha(1-\alpha)} v_i\rho_i \sum_{j=1}^m \sum_{u=1}^{T} \mspd_j^{\pw-1}\xb_{iju}.
	\label{eq:II_e_term}
	\end{eqnarray}
	
	Since there are no release date constraints there is no idle time between jobs,
	\begin{eqnarray}
	C_i^\alpha = \sum_{j=1}^i \frac{\rho_j}{\spd_j^\alpha} = \sum_{j=1}^i \rho_j \mathbb{E}\left[\frac{1}{\hat{\spd_j}}\right] = \frac{1}{\alpha} \sum_{j=1}^i \sum_{l=1}^m \sum_{u=1}^{\tau_j^\alpha} \frac{\rho_j}{\mspd_l} \xt_{jlu} \leq \frac{1}{\alpha} \sum_{j=1}^n \sum_{l=1}^m \sum_{u=1}^{\tau_i^\alpha} \frac{\rho_j}{\mspd_l} \xb_{jlu},
	\label{eq:II_c_alpha_def}
	\end{eqnarray}
	and from constraint (\ref{eq:II_ctr2}) for $t = \tau_i^\alpha$ we get,
	$C_i^\alpha \leq \frac{1}{\alpha} \tau_{\tau_i^\alpha}$.
	
	Let $\bar{C}_i = \sum_{j=1}^m\sum_{t=1}^T \tau_{t-1} \xb_{ijt}$ denote the optimal fractional completion time given by the optimal solution of the relaxed linear program (\ref{eq:II_objective})-(\ref{eq:II_ctr4}). Since it is possible that $\sum_{j=1}^m\sum_{t=1}^{\tau_i^\alpha} \xb_{ijt} > \alpha$; we define $X_{i}^{(1)} = \alpha - \sum_{j=1}^m\sum_{t=1}^{\tau_i^\alpha - 1} \xb_{ijt}$ and $X_{i}^{(2)} = \sum_{j=1}^m\sum_{t=1}^{\tau_i^\alpha} \xb_{ijt} - \alpha$, thus $X_{i}^{(1)} + X_{i}^{(2)} = \sum_{j=1}^m \xb_{ij {\tau_i^\alpha}}$, and we can rewrite
	\begin{eqnarray}
	\bar{C}_i = \sum_{j=1}^m\sum_{t=1}^{\tau_i^\alpha-1} \tau_{t-1} \xb_{ijt} + \tau_{\tau_i^\alpha-1}X_{i}^{(1)} + \tau_{\tau_i^\alpha-1}X_{i}^{(2)} + \sum_{j=1}^m\sum_{t=\tau_i^\alpha+1}^{T} \tau_{t-1} \xb_{ijt},
	\end{eqnarray}
	and eliminating the lower terms of the previous sum we get that,
	\begin{eqnarray}
	\bar{C}_i \geq \tau_{\tau_i^\alpha-1}X_{i}^{(2)} + \sum_{j=1}^m\sum_{t=\tau_i^\alpha+1}^{T} \tau_{t-1} \xb_{ijt} \geq \tau_{\tau_i^\alpha-1}X_{i}^{(2)} + \sum_{j=1}^m\sum_{t=\tau_i^\alpha+1}^{T} \tau_{\tau_i^\alpha-1} \xb_{ijt} = \tau_{\tau_i^\alpha-1} (1-\alpha).
	\label{eq:II_cbar_bound}
	\end{eqnarray}
	
	Because $\tau_{\tau_i^\alpha} = (1+\epsilon)\tau_{\tau_i^\alpha - 1}$, from (\ref{eq:II_c_alpha_def}) and (\ref{eq:II_cbar_bound}) we get that $C_i^\alpha \leq \frac{(1+\epsilon)}{\alpha(1-\alpha)} \bar{C}_i \Rightarrow \sum_{i=1}^n w_iC_i^\alpha \leq \frac{(1+\epsilon)}{\alpha(1-\alpha)} \sum_{i=1}^n w_i \bar{C}_i$. From this, (\ref{eq:II_xopt_bound}) and (\ref{eq:II_e_term}) it follows that,
	\begin{eqnarray}
	\sum_{i=1}^n v_i \rho_i (\spd_i^\alpha)^{\pw-1} + \sum_{i=1}^n w_i C_i^{\alpha} \leq \frac{(1+\epsilon)}{\alpha(1-\alpha)} \left[\sum_{i=1}^n\sum_{j=1}^m\sum_{t=1}^T v_i \rho_i \mspd_j^{\pw-1} x_{ijt}^* + \sum_{i=1}^n w_i C_i^*\right],
	\end{eqnarray}
	and we set $\alpha = \arg\min_{0\leq\alpha\leq 1}\left\{\frac{1}{\alpha(1-\alpha)} \right\} = \frac{1}{2}$, to minimize the bound. By Lemma \ref{lm:speeddown_bound}, which bounds the final rounding error, we get the desired approximation ratio.
\end{proof}

\subsection{Single Machine Problem with Precedence and Release Date Constraints}
We now analyse the case with precedence constraints and release dates. Release dates makes the problem somewhat harder since they can introduce idle times between jobs.

\begin{theorem}
	The \algI~algorithm with $\alpha=\sqrt{2} - 1$ is a $(3 + 2\sqrt{2})(1+\epsilon)(1 + \delta)$-approximation algorithm for the \ect{r_i, prec}~problem, with $E_i(\spd_i)=v_i\rho_i\spd_i^{\pw -1}$.
	\label{th:II_precr}
\end{theorem}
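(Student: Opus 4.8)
The plan is to follow the proof of Theorem~\ref{th:II_nc} almost verbatim, and to change only the bound on the completion times so as to absorb the idle time that release dates can insert into the schedule. First I would record that the output is feasible: Lemma~\ref{lm:II_prec_ctr} together with Step~3 preserves the precedence order, while Step~6 starts job $\p{i}^\alpha$ at $\max\{r_{\p{i}^\alpha},\bar C_{\p{i-1}^\alpha}^\alpha\}$, so no job starts before its release date. Keeping the convention $\tau_1^\alpha\le\cdots\le\tau_n^\alpha$, I would note that inequality (\ref{eq:II_xopt_bound}), relating the LP optimum to the true optimum, and the energy estimates (\ref{eq:energy_terms})--(\ref{eq:II_e_term}), obtained from Jensen's inequality and the truncation $\xt_{iju}\le\xb_{iju}$, involve no idle-time assumption and carry over unchanged; in particular $v_i\rho_i(\spd_i^\alpha)^{\pw-1}\le\frac{1}{\alpha}v_i\rho_i\sum_{j=1}^m\sum_{u=1}^T\mspd_j^{\pw-1}\xb_{iju}$.

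The one genuinely new step is the completion-time bound. I would fix a job $i$ and let $k$ be the last job, among those processed no later than $i$ in the order $\p{}^\alpha$, whose start time in the schedule equals its release date $r_k$; such a $k$ exists because the first processed job always starts at its release date. By the choice of $k$ there is no idle time between the start of $k$ and the completion of $i$, so
\begin{eqnarray*}
C_i^\alpha = r_k + \sum_{j=k}^{i}\frac{\rho_j}{\spd_j^\alpha},
\end{eqnarray*}
the sum running over the jobs processed from $k$ through $i$. For the first term, the definition of $\tau_k^\alpha$ yields a pair $(l,u)$ with $u\le\tau_k^\alpha$ and $\xb_{klu}>0$, and constraint (\ref{eq:II_ctr4}) then forces $\tau_u\ge r_k+\rho_k/\mspd_l\ge r_k$, so $r_k\le\tau_{\tau_k^\alpha}\le\tau_{\tau_i^\alpha}$, the last inequality because $\tau_k^\alpha\le\tau_i^\alpha$. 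For the second term I would reuse the chain of (in)equalities in (\ref{eq:II_c_alpha_def}) --- which uses only $\xt_{jlu}\le\xb_{jlu}$ and constraint (\ref{eq:II_ctr2}) at $t=\tau_i^\alpha$, not the absence of idle time --- to bound the total processing time of the jobs run no later than $i$, which includes $\sum_{j=k}^i\rho_j/\spd_j^\alpha$, by $\frac{1}{\alpha}\tau_{\tau_i^\alpha}$. Together these give $C_i^\alpha\le\left(1+\frac1\alpha\right)\tau_{\tau_i^\alpha}$.

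From here the argument closes as in Theorem~\ref{th:II_nc}. Using $\tau_{\tau_i^\alpha}=(1+\epsilon)\tau_{\tau_i^\alpha-1}$ and the lower bound $\bar C_i\ge(1-\alpha)\tau_{\tau_i^\alpha-1}$ from (\ref{eq:II_cbar_bound}) I would get $C_i^\alpha\le\frac{(1+\alpha)(1+\epsilon)}{\alpha(1-\alpha)}\bar C_i$; since $\frac1\alpha\le\frac{(1+\alpha)(1+\epsilon)}{\alpha(1-\alpha)}$, the same constant also dominates the energy estimate above, so summing over $i$ and invoking (\ref{eq:II_xopt_bound}),
\begin{eqnarray*}
\sum_{i=1}^n v_i\rho_i(\spd_i^\alpha)^{\pw-1}+\sum_{i=1}^n w_iC_i^\alpha \le \frac{(1+\alpha)(1+\epsilon)}{\alpha(1-\alpha)}\left[\sum_{i=1}^n\sum_{j=1}^m\sum_{t=1}^T v_i\rho_i\mspd_j^{\pw-1}x_{ijt}^* + \sum_{i=1}^n w_iC_i^*\right].
\end{eqnarray*}
Minimizing $\frac{1+\alpha}{\alpha(1-\alpha)}$ over $(0,1)$ gives $\alpha^2+2\alpha-1=0$, i.e. $\alpha=\sqrt2-1$, at which the constant equals $3+2\sqrt2$; Lemma~\ref{lm:speeddown_bound} then contributes the speed-rounding factor $(1+\delta)$, giving the claimed $(3+2\sqrt2)(1+\epsilon)(1+\delta)$-approximation.

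I expect the only real obstacle to be the handling of the idle time: one must check that the ``last on-time job'' $k$ is well defined in the schedule produced by Step~6 and that $r_k$ can be charged to the LP through constraint (\ref{eq:II_ctr4}) --- which is precisely the constraint included in the formulation for this purpose. Everything else is a re-run of the computation of Theorem~\ref{th:II_nc} with the constant $\frac{1}{\alpha(1-\alpha)}$ replaced by $\frac{1+\alpha}{\alpha(1-\alpha)}$, the extra summand coming from the $r_k$ term.
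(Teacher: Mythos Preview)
Your proof is correct and follows essentially the same approach as the paper: keep the energy bound from Theorem~\ref{th:II_nc}, bound $C_i^\alpha$ by a release-date term plus a processing-time term each controlled by $\tau_{\tau_i^\alpha}$, invoke the lower bound (\ref{eq:II_cbar_bound}), optimize $\frac{1+\alpha}{\alpha(1-\alpha)}$, and finish with Lemma~\ref{lm:speeddown_bound}. The only cosmetic difference is that the paper uses the cruder estimate $C_i^\alpha\le\max_{j\in J_1\cup\cdots\cup J_{\tau_i^\alpha}} r_j+\sum_{j\in J_1\cup\cdots\cup J_{\tau_i^\alpha}}\rho_j/\spd_j^\alpha$ directly, rather than your ``last on-time job'' decomposition, but both arguments land on the identical bound $C_i^\alpha\le\frac{1+\alpha}{\alpha}\tau_{\tau_i^\alpha}$.
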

\begin{proof}
	The bound for the energy terms computed in equation (\ref{eq:energy_terms}) are still valid when there is idle time between jobs, we have that,
	\begin{eqnarray}
	v_i \rho_i (\spd_i^\alpha)^{\pw-1} \leq \frac{(1+\epsilon)}{\alpha(1-\alpha)} v_i\rho_i \sum_{j=1}^m \sum_{u=1}^{T} \mspd_j^{\pw-1}\xb_{iju} \leq \frac{(1+\epsilon)(1+\alpha)}{\alpha(1-\alpha)} v_i\rho_i \sum_{j=1}^m \sum_{u=1}^{T} \mspd_j^{\pw-1}\xb_{iju}.
	\label{eq:II_e_term2}
	\end{eqnarray}
	
	When bounding the completion time $C_i^\alpha$, given the sorting done in step 3 of the \algI~algorithm, now one has to consider all the jobs up to the ones in set $J_{\tau_i^\alpha}$, and thus,
	\begin{eqnarray}
	C_i^\alpha \leq \max_{j\in\{J_1,\ldots,J_{\tau_i^\alpha}\}} r_j + \sum_{j\in\{J_1,\ldots,J_{\tau_i^\alpha}\}}\frac{\rho_j}{\spd_j^\alpha}.
	\label{eq:II_calpha_bound1}
	\end{eqnarray}
	
	Since all jobs that have been at least partially processed up to time interval $I_t$ need to be released before $\tau_t$, it follows that $\max_{j\in\{J_1,\ldots,J_{\tau_i^\alpha}\}} r_j \leq \tau_{\tau_i^\alpha}$. On the other hand, we also have that,
	\begin{eqnarray}
	\sum_{j\in\{J_1,\ldots,J_{\tau_i^\alpha}\}} \frac{\rho_j}{\spd_j^\alpha} = \frac{1}{\alpha} \sum_{j\in\{J_1,\ldots,J_{\tau_i^\alpha}\}} \sum_{l=1}^m \sum_{u=1}^{\tau_j^\alpha} \frac{\rho_j}{\mspd_l} \xt_{jlu} \leq \frac{1}{\alpha} \sum_{j=1}^n \sum_{l=1}^m \sum_{u=1}^{\tau_i^\alpha} \frac{\rho_j}{\mspd_l} \xb_{jlu} \leq \frac{1}{\alpha} \tau_{\tau_i^\alpha},
	\end{eqnarray}
	where the last inequality follows from constraint (\ref{eq:II_ctr2}) with $t=\tau_i^\alpha$. Thus, $C_i^\alpha \leq \frac{(1+\alpha)}{\alpha}\tau_{\tau_i^\alpha}$. Since $\bar{C}_i = \sum_{j=1}^m\sum_{t=1}^T \tau_{t-1} \xb_{ijt}$, (\ref{eq:II_cbar_bound}) is still valid and because $\tau_{\tau_i^\alpha} = (1+\epsilon) \tau_{\tau_i^\alpha - 1}$, we get,
	\begin{eqnarray}
	C_i^\alpha \leq \frac{(1+\epsilon)(1+\alpha)}{\alpha(1-\alpha)}\bar{C}_i \Rightarrow \sum_{i=1}^n w_iC_i^\alpha \leq \frac{(1+\epsilon)(1+\alpha)}{\alpha(1-\alpha)} \sum_{i=1}^n w_i\bar{C}_i.
	\label{eq:II_calpha_bound3}
	\end{eqnarray}
	
	Finally, from (\ref{eq:II_e_term2}) and (\ref{eq:II_calpha_bound3}) it follows that,
	\begin{eqnarray}
	\sum_{i=1}^n v_i \rho_i (\spd_i^\alpha)^{\pw-1} + \sum_{i=1}^n w_i C_i^{\alpha}
	\leq \frac{(1+\epsilon)(1+\alpha)}{\alpha(1-\alpha)} \left[\sum_{i=1}^n\sum_{j=1}^m\sum_{t=1}^T v_i \rho_i \mspd_j^{\pw-1} x_{ijt}^* + \sum_{i=1}^n w_i C_i^*\right],
	\end{eqnarray}
	and by setting $\alpha = \arg\min_{0\leq\alpha\leq 1} \left\{\frac{(1+\alpha)}{\alpha(1-\alpha)} \right\} = \sqrt{2} - 1$, and again using Lemma \ref{lm:speeddown_bound} to bound the speed-rounding error, we get the required approximation ratio.
\end{proof}

If no precedence constraints and release dates exist, there are two versions of this problem that can be optimally solved in polynomial time: when all weights $w_i$ are equal, and when all jobs are of the same size (i.e. $\rho_i = \rho$, $\forall i$) and all jobs have the same energy cost function. For these cases we have the following result:
\begin{theorem}
If $w_i = w, ~\forall i$ or $\rho_iv_i^{\frac{1}{\pw}} = \xi, ~\forall i$ then the order $\p{}$ is optimal if
$$\frac{w_{\pi(i)}}{\rho_{\pi(i)}v_{\pi(i)}^{\frac{1}{\pw}}} \geq \frac{w_{\pi(i+1)}}{\rho_{\pi(i+1)}v_{\pi(i+1)}^{\frac{1}{\pw}}}, ~\forall i\in\set{1,\ldots,n-1}.$$
\label{th:order_nc}
\end{theorem}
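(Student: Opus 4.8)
The plan is to first eliminate the speed variables for a fixed processing order, turning the problem into a pure sequencing problem, and then settle that sequencing problem by an exchange argument tailored to each of the two structural hypotheses. Because there are no release dates and no precedence constraints, idle time only postpones completions and never affects energy, so an optimal schedule runs the jobs back to back: $C_{\p{i}}=\sum_{k=1}^{i}\rho_{\p{k}}/\spd_{\p{k}}$. Fix an order $\p{}$ and put $W_i=\sum_{k=i}^{n}w_{\p{k}}$. Swapping the order of summation in $\sum_i w_{\p{i}}C_{\p{i}}$ gives
\[
f(\p{},\vk{\spd})=\sum_{i=1}^{n}\Bigl(v_{\p{i}}\rho_{\p{i}}\spd_{\p{i}}^{\pw-1}+\frac{\rho_{\p{i}}W_i}{\spd_{\p{i}}}\Bigr),
\]
which is separable in the $\spd_{\p{i}}$. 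Minimizing the $i$-th summand over $\spd_{\p{i}}>0$ gives $\spd_{\p{i}}^{\star}=\bigl(W_i/(v_{\p{i}}(\pw-1))\bigr)^{1/\pw}$, and since at this speed the energy part and the weighted-delay part of the summand are in the fixed ratio $1:(\pw-1)$, the summand equals $c_\pw\,\rho_{\p{i}}v_{\p{i}}^{1/\pw}W_i^{(\pw-1)/\pw}$ with $c_\pw=\pw(\pw-1)^{-(\pw-1)/\pw}$ depending only on $\pw$. Hence, writing $p_i=\rho_i v_i^{1/\pw}$, the least cost achievable with a given order is $c_\pw\sum_{i=1}^{n}p_{\p{i}}W_i^{(\pw-1)/\pw}$, and it remains to show that the stated order minimizes $\Phi(\p{}):=\sum_{i=1}^{n}p_{\p{i}}W_i^{(\pw-1)/\pw}$ over all orders; note the ratio $w_{\p{i}}/p_{\p{i}}$ in the theorem is exactly $w_{\p{i}}/(\rho_{\p{i}}v_{\p{i}}^{1/\pw})$.

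In the case $w_i=w$ for all $i$, the tail weights are order-independent, $W_i=(n-i+1)w$, so $\Phi(\p{})=w^{(\pw-1)/\pw}\sum_{i=1}^{n}p_{\p{i}}(n-i+1)^{(\pw-1)/\pw}$: a sum of the order-dependent terms $p_{\p{i}}$ paired against the fixed, strictly decreasing coefficients $(n-i+1)^{(\pw-1)/\pw}$. By the rearrangement inequality this is minimized by matching the smallest $p_{\p{i}}$ with the largest coefficients, i.e.\ by sequencing the jobs in nondecreasing order of $p_i$; since $w_i$ is constant, this is precisely nonincreasing order of $w_i/p_i$, the claimed rule.

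In the case $p_i=\xi$ for all $i$, $\Phi(\p{})=\xi\sum_{i=1}^{n}W_i^{(\pw-1)/\pw}$, which I would handle by adjacent transpositions. The term $W_1$ is the order-invariant total weight, so only the terms with $i\ge2$ are affected by reordering. Swap the jobs in positions $i$ and $i+1$, of weights $a$ and $b$, and let $R=\sum_{k>i+1}w_{\p{k}}$: every $W_j$ with $j\ne i+1$ is unchanged (in particular $W_i=a+b+R$ either way), while $W_{i+1}$ equals $b+R$ if $a$ precedes $b$ and $a+R$ if $b$ precedes $a$. Since $x\mapsto x^{(\pw-1)/\pw}$ is increasing, placing the larger weight first never increases $\Phi$; a bubble-sort argument then shows that sequencing in nonincreasing order of $w_i$ — equivalently, since $p_i$ is constant, of $w_i/p_i$ — is optimal. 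The constant factors $c_\pw$ and $\xi$ and the fixed summand $W_1$ play no role in the minimizer, so any order obeying the displayed inequality is globally optimal.

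I do not expect a genuinely hard step here; the per-job speed optimization and the exchange arguments are routine. The points that do need a word of care are the absence of idle time (immediate since $w_i>0$ and there are no release dates), the fact that $(\pw-1)/\pw\in(0,1)$ so that $x\mapsto x^{(\pw-1)/\pw}$ is increasing and the rearrangement inequality is applied in the correct direction, and the standard observation that for these two special structures optimality against adjacent transpositions already forces global optimality.
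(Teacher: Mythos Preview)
Your argument is correct and follows essentially the same route as the paper: both first optimize out the speeds for a fixed order to reach the reduced objective $\sum_i p_{\pi(i)}W_i^{(\pw-1)/\pw}$ (the paper calls this the ``dual formulation'' and states it without derivation), and both then settle each special case via an adjacent-exchange argument. The only cosmetic difference is that for the equal-weights case you invoke the rearrangement inequality directly, whereas the paper writes out the two-job swap and checks the sign of $F(\pi)-F(\pi_k)$; the content is the same.
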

\begin{proof}
	For simplicity we will define $\xi_i \equiv \rho_iv_i^{\frac{1}{k}}$, $q = \frac{k-1}{k}$, and $\mathpzc{K} \equiv \frac{k}{(k-1)^{\frac{k-1}{k}}}$. First, dual formulation of problem (\ref{eq:general_costfunction}) with no precedence or release date constraints is given by,
	\begin{eqnarray}
	\min_{\pi} F(\pi) = \min_{\pi} \sum_{i=1}^n \mathpzc{K} \xi_{\pi(i)}\left(\sum_{j=i}^nw_{\pi(j)} \right)^{q}.
	\label{eq:dual}
	\end{eqnarray}
	
	We now prove both cases by contradiction using the dual formulation.
	
	When $w_i = w, ~\forall i$, Theorem \ref{th:order_nc} implies that in the optimal order $\xi_{\pi(i+1)}\geq\xi_{\pi(i)}$. By contradiction, let $\pi$ be an optimal order such that for some index $k$, $\xi_{\pi(k+1)} < \xi_{\pi(k)}$. For this order the total cost is
	\begin{eqnarray*}
	F(\pi) & = & \sum_{i=1}^n \mathpzc{K} \xi_{\pi(i)}\left(\sum_{j=i}^nw \right)^{q} = \sum_{i=1}^n \mathpzc{K} \xi_{\pi(i)}\left((n-i+1)w \right)^{q},\\
	& = & \mathpzc{K}w^q \left\{\xi_{\pi(k)}(n-k+1)^q + \xi_{\pi(k+1)}(n-k)^q + \sum_{i=1;~i\neq k, k+1}^n \xi_{\pi(i)}(n-i+1)^{q}\right\}.
	\end{eqnarray*}

	Let $\pi_k$ define the order where we switch jobs $k$ and $k+1$ from order $\pi$, i.e. $\pi_k(k) = \pi(k+1)$ and $\pi_k(k+1) = \pi(k)$. Given this order we have that
	\begin{eqnarray*}
	F(\pi) - F(\pi_k) & = & \mathpzc{K}w^q \left\{\xi_{\pi(k)}(n-k+1)^q + \xi_{\pi(k+1)}(n-k)^q - \xi_{\pi(k+1)}(n-k+1)^q - \xi_{\pi(k)}(n-k)^q\right\},\\
	& = & \mathpzc{K}w^q \left\{(n-k+1)^q(\xi_{\pi(k)} - \xi_{\pi(k+1)}) - (n-k)^q(\xi_{\pi(k)} - \xi_{\pi(k+1)})\right\},\\
	& = & \mathpzc{K}w^q \left\{\left(\xi_{\pi(k)} - \xi_{\pi(k+1)}\right)\left((n-k+1)^q - (n-k)^q\right)\right\}.
	\end{eqnarray*}

	By our initial assumption the first term is positive (since $\xi_{\pi(k+1)} < \xi_{\pi(k)}$) and the second one is always positive, hence $F(\pi) - F(\pi_k) > 0$ which is a contradiction, since that implies that $\pi_k$ has a smaller cost.

	For the case when $\xi_i=\xi, ~\forall i$, Theorem \ref{th:order_nc} implies that an order $\pi$ is optimal then $w_{\pi(i)}\geq w_{\pi(i+1)}$. Let $\pi$ be an optimal order  such that for some index $k$, $w_{\pi(k)} < w_{\pi(k+1)}$. The total cost for this solution is
	\begin{eqnarray*}
	F(\pi) & = & \sum_{i=1}^n \mathpzc{K} \xi\left(\sum_{j=i}^nw_{\pi(i)} \right)^{q} =  \mathpzc{K}\xi \left\{ \sum_{i=1}^k \left(\sum_{j=i}^nw_{\pi(i)} \right)^{q} + \left(\sum_{j=k+1}^nw_{\pi(i)} \right)^{q} + \sum_{i=k+2}^n \left(\sum_{j=i}^nw_{\pi(i)} \right)^{q}\right\},\\
	& = & \mathpzc{K}\xi \left\{ \sum_{i=1}^k \left(\sum_{j=i}^nw_{\pi(i)} \right)^{q} + \left(w_{\pi(k+1)} + \sum_{j=k+2}^nw_{\pi(i)} \right)^{q} + \sum_{i=k+2}^n \left(\sum_{j=i}^nw_{\pi(i)} \right)^{q}\right\}.
	\end{eqnarray*}

Let $\pi_k$ define the order where we switch jobs $k$ and $k+1$ from order $\pi$. Given this new order we have
\begin{eqnarray*}
F(\pi) - F(\pi_k) & = & \mathpzc{K}\xi \left\{ \left(w_{\pi(k+1)} + \sum_{j=k+2}^nw_{\pi(i)} \right)^{q} - \left(w_{\pi(k)} + \sum_{j=k+2}^nw_{\pi(i)} \right)^{q}\right\} > 0,
\end{eqnarray*}
since $w_{\pi(k+1)} > w_{\pi(k)}$ by our initial assumption, which is a contradiction since this result implies that order $\pi_k$ has a lower cost.
\end{proof}

\section{Extension to the Weighted Tardiness Problem}
\label{sec:tardiness}

In this section we extend our results to the weighted tardiness setting. We still allow for arbitrary precedence constraints but no release dates. In this case, each job $i$ also has a deadline $d_i$. The tardiness $T_i$ of job $i$ is defined as $T_i = \max\{0, C_i-d_i\}$, and the objective function is now given by,
\begin{eqnarray}
g(\p{}, \vk{\spd}) = \sum_{i=1}^n v_i\rho_i\spd_i^{\pw-1} + \sum_{i=1}^n w_{\p{i}}\left(C_{\p{i}} - d_{\p{i}}\right)^{+}.
\label{eq:due_date_costfunction}
\end{eqnarray}

We now formulate the problem using a modification of the interval-and-speed-indexed formulation presented in Section \ref{sec:formulation}. Because the completion time can be bounded by $\sum_{j=1}^m\sum_{t=1}^T \tau_{t-1} x_{ijt}$, we can bound (\ref{eq:due_date_costfunction}) from below by the following optimization problem,
\begin{eqnarray}
\min_{\vk{x}} \sum_{i=1}^n\sum_{j=1}^m\sum_{t=1}^T \left(v_i\rho_i\mspd_j^{\pw-1} + w_i\left(\tau_{t-1} - d_i\right)^+ \right)x_{ijt},
\label{eq:IIdue_objective}
\end{eqnarray}
together with constraints (\ref{eq:II_ctr1})-(\ref{eq:II_ctr5}) from the interval-indexed formulation. Note that although the objective (\ref{eq:due_date_costfunction}) is non-linear, because we have a interval-indexed formulation, (\ref{eq:IIdue_objective}) is linear.

\begin{figure}[t!]
	\centering
	\algbox{
		\begin{codebox}
			\ProcStart{\algduelong~(\algdue)}
			\zi {\bf Inputs: } set of jobs, $\alpha \in (0,1)$, $\epsilon > 0$, $\gamma > 1$, set of speeds $\vk{S} = \{\mspd_1, \ldots, \mspd_m\}$.
			\li Compute an optimal solution $\vk{\xb}$ to the linear relaxation (\ref{eq:IIdue_objective}), (\ref{eq:II_ctr1})-(\ref{eq:II_ctr5}).
			\li Compute the $\alpha$-intervals $\vk{\tau^{\alpha}}$ and the sets $J_t$ as in the \algI~algorithm.
			\li Compute an order $\p{}^{\alpha}$ that has the sets $J_t$ ordered in non-decreasing values of $t$ and the\\ jobs within each set in a manner consistent with the precedence constraints.
			\li Compute the $\alpha$-speeds $\vk{\spd^\alpha}$ and scale each $\spd_i^{\alpha}$ to $\tilde{\spd}_i^\alpha = \gamma\spd_i^{\alpha}$.
			\li Round up each $\tilde{\spd}_i^\alpha$ to the next speed in $\vk{S}$, $\bar{\spd}_i^\alpha$ and run each job $i$ at this new speed.
			\li Set the $i$-th job to start at time $\max\{r_{\p{i}}, \bar{C}_{\p{i-1}}^\alpha\}$, where $\bar{C}_{\p{i-1}}^\alpha$ is the completion\\ time of the previous job using the rounded $\alpha$-speeds, and $\bar{C}_{\p{0}}^\alpha = 0$.
			\li \Return speeds $\bar{\vk{\spd}}^{\alpha}$ and schedule $(\p{}^{\alpha}, \bar{\vk{C}}^{\alpha})$.
		\end{codebox}
	}
	\caption{\algduelong~Algorithm}
	\label{alg:saias-t}
\end{figure}
We approximately solve (\ref{eq:due_date_costfunction}) using the \proc{\algduelong} (\algdue) Algorithm displayed in Figure \ref{alg:saias-t}. The main difference with the \algI~algorithm, is that in step 4 we scale up the $\alpha$-speeds. This scaling makes the completion time of the relaxed LP comparable to the completion time of the algorithm's output, and thus jobs that have $0$ tardiness in the LP also have $0$ tardiness in our algorithm. If we rounded speeds down, jobs with $0$ tardiness in the LP could, at a lower speed, miss their deadline, and thus the approximation ratio could be arbitrary large.

We now analyse the algorithm assuming w.l.o.g. that $\tau_1^{\alpha}\leq \tau_2^{\alpha}\leq \ldots \leq \tau_n^{\alpha}$. Since Lemma \ref{lm:II_prec_ctr} remains valid, arguments identical to those in Section \ref{sec:formulation} show that the output of the \algdue~algorithm is feasible; thus, we have the following theorem:
\begin{theorem}
	The \algdue~algorithm with $\gamma = \frac{(1+\epsilon)}{\alpha(1-\alpha)}$ and $\alpha=\frac{1}{2}$ is a $4^{\pw}(1+\epsilon)^{\pw-1}(1 + \delta)^{\pw-1}$-approximation algorithm for the \etr{prec}~problem, with $E_i(\spd_i)=v_i\rho_i\spd_i^{\pw -1}$.
	\label{th:IIdue_nc}
\end{theorem}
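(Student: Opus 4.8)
The plan is to charge the energy and the tardiness of the \algdue~schedule separately against the optimum of the relaxed linear program (\ref{eq:IIdue_objective}), (\ref{eq:II_ctr1})--(\ref{eq:II_ctr5}), which lower-bounds the optimal value $\mathrm{OPT}$ of (\ref{eq:due_date_costfunction}). Split that LP optimum as $\mathrm{LP}=L_E+L_T$, where $L_E=\sum_{i,j,t}v_i\rho_i\mspd_j^{\pw-1}\xb_{ijt}$ and $L_T=\sum_{i,j,t}w_i(\tau_{t-1}-d_i)^{+}\xb_{ijt}$ are the (nonnegative) energy and tardiness parts of the objective at the LP optimum $\vk{\xb}$. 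The reason the scale factor $\gamma=\frac{(1+\epsilon)}{\alpha(1-\alpha)}$ is the right one is that it is exactly the factor by which completion times were inflated in the proof of Theorem~\ref{th:II_nc}: there, combining (\ref{eq:II_c_alpha_def}) and (\ref{eq:II_cbar_bound}), we obtained $C_i^\alpha\le\gamma\,\bar C_i$, with $\bar C_i=\sum_{j,t}\tau_{t-1}\xb_{ijt}$ the fractional completion time and $C_i^\alpha$ the completion time under the unscaled $\alpha$-speeds. Feasibility of the schedule is already in hand from the remark preceding the theorem (Lemma~\ref{lm:II_prec_ctr} plus the ordering argument of Section~\ref{sec:approx_alg}).

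First I would bound the completion times produced by \algdue. Since the theorem forbids release dates there is no idle time, so running job $i$ at $\tilde{\spd}_i^\alpha=\gamma\spd_i^\alpha$ simply divides all completion times by $\gamma$, giving $\tilde{C}_i^\alpha=\frac{1}{\gamma}C_i^\alpha\le\bar C_i$. Rounding the speeds \emph{up} in step~5 only shortens completion times, so $\bar{C}_i^\alpha\le\tilde{C}_i^\alpha\le\bar C_i=\sum_{j,t}\tau_{t-1}\xb_{ijt}$. From $\sum_{j,t}\xb_{ijt}=1$ (constraint (\ref{eq:II_ctr1})) we then get $\bar{C}_i^\alpha-d_i\le\sum_{j,t}(\tau_{t-1}-d_i)\xb_{ijt}$, and Jensen's inequality for the convex map $x\mapsto x^{+}$, with $\{\xb_{ijt}\}_{j,t}$ treated as a probability distribution, yields $(\bar{C}_i^\alpha-d_i)^{+}\le\sum_{j,t}(\tau_{t-1}-d_i)^{+}\xb_{ijt}$. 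Summing over jobs, $\sum_i w_i(\bar{C}_i^\alpha-d_i)^{+}\le L_T$; in particular any job with zero LP tardiness gets zero tardiness in the schedule, as advertised.

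Next I would bound the energy. Equation (\ref{eq:energy_terms}) and the first inequality of (\ref{eq:II_e_term}), together with $\xt_{iju}\le\xb_{iju}$, give $v_i\rho_i(\spd_i^\alpha)^{\pw-1}\le\frac{v_i\rho_i}{\alpha}\sum_{j,u}\mspd_j^{\pw-1}\xb_{iju}$; it is important to use this tighter $1/\alpha$ bound rather than the looser $\gamma$ bound also stated in (\ref{eq:II_e_term}), since otherwise we would lose an extra $(1+\epsilon)$ factor. Since consecutive speeds differ by a factor at most $(1+\delta)$, step~5 gives $\bar{\spd}_i^\alpha\le(1+\delta)\tilde{\spd}_i^\alpha=(1+\delta)\gamma\,\spd_i^\alpha$, hence $v_i\rho_i(\bar{\spd}_i^\alpha)^{\pw-1}\le(1+\delta)^{\pw-1}\gamma^{\pw-1}v_i\rho_i(\spd_i^\alpha)^{\pw-1}$; summing over jobs, $\sum_i v_i\rho_i(\bar{\spd}_i^\alpha)^{\pw-1}\le\frac{(1+\delta)^{\pw-1}\gamma^{\pw-1}}{\alpha}\,L_E$.

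Finally I would add the two estimates: the cost of the schedule is at most $\frac{(1+\delta)^{\pw-1}\gamma^{\pw-1}}{\alpha}L_E+L_T\le\max\bigl\{\frac{(1+\delta)^{\pw-1}\gamma^{\pw-1}}{\alpha},\,1\bigr\}\,\mathrm{LP}\le\max\{\ldots\}\cdot\mathrm{OPT}$. Substituting $\alpha=\frac{1}{2}$, so $\gamma=4(1+\epsilon)$, the first argument of the max equals $2\cdot4^{\pw-1}(1+\epsilon)^{\pw-1}(1+\delta)^{\pw-1}$, which is $\ge1$ for $\pw\ge2$ and is at most $4^{\pw}(1+\epsilon)^{\pw-1}(1+\delta)^{\pw-1}$ since $2\le4$; this is the claimed ratio. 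The one point that needs care is the possibility that $\gamma\,\spd_i^\alpha>\mspd_m$, leaving step~5 with no speed to round up to; this is handled by assuming the speed set extends high enough (or by appealing to the continuous-speed extension of Section~\ref{sec:extension_cspeeds}), and, relatedly, the identity $\tilde{C}_i^\alpha=\frac{1}{\gamma}C_i^\alpha$ fails once release-date-induced idle time is allowed, which is exactly why the theorem is stated without release dates. Everything else is a direct reuse of the completion-time analysis plus a single application of Jensen's inequality to the non-linear tardiness term.
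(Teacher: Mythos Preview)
Your argument is correct and follows essentially the same route as the paper: scale the $\alpha$-speeds by $\gamma=\frac{1+\epsilon}{\alpha(1-\alpha)}$ so that $\tilde C_i^\alpha\le\bar C_i$, use convexity of $x\mapsto x^{+}$ to pass from $\bar C_i$ to the LP tardiness, and pay $\gamma^{\pw-1}$ (plus the $(1+\delta)^{\pw-1}$ rounding loss) on the energy side. The only cosmetic difference is that you keep the tighter $1/\alpha$ energy factor from the first inequality of (\ref{eq:II_e_term}) and then weaken $2\cdot4^{\pw-1}$ to $4^{\pw}$ at the end, whereas the paper applies the looser $\frac{1+\epsilon}{\alpha(1-\alpha)}$ factor directly in (\ref{eq:IIdue_e_term}) and remarks afterward that your tighter route is available; either way the stated ratio is the same.
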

\begin{proof}
	Let $\bar{C}_i = \sum_{j=1}^m \sum_{t=1}^T \tau_{t-1}\xb_{ijt}$ denote the optimal fractional completion time of the relaxed linear program. $(\bar{C}_i - d_i)^{+}$ is a lower bound for the optimal tardiness $(C^*_i - d_i)^{+}$, since $\sum_{jt}(\tau_{t-1}-d_i)^{+}\xb_{ijt} \geq (\bar{C}_i - d_i)^{+}$. Thus,
	\begin{eqnarray}
	\sum_{i=1}^n\sum_{j=1}^m\sum_{t=1}^T v_i\rho_i\mspd_j^{\pw-1}\xb_{ijt} + \sum_{i=1}^n w_i \left(\bar{C}_i - d_i\right)^{+} \leq \sum_{i=1}^n\sum_{j=1}^m\sum_{t=1}^T v_i\rho_i\mspd_j^{\pw-1} x_{ijt}^{*} + \sum_{i=1}^n w_i \left(C_i^{*} - d_i\right)^{+}.
	\label{eq:IIdue_xopt_bound}
	\end{eqnarray}
	
	Let $\tilde{C}_i^{\alpha}$ denote the completion time of job $i$ using speeds $\tilde{\vk{\spd}}^{\alpha}$ and $C_i^{\alpha}$ the one using speeds $\vk{\spd}^{\alpha}$. Because there are no release date constraints, there is no idle time in between jobs; therefore,
	\begin{eqnarray}
	\tilde{C}_i^{\alpha} = \sum_{j=1}^i \frac{\rho_j}{\tilde{\spd}_j^{\alpha}} = \frac{1}{\gamma} \sum_{j=1}^i \frac{\rho_j}{\spd_j^\alpha} = \frac{1}{\gamma} C_i ^{\alpha}.
	\label{eq:IIdue_c_alpha}
	\end{eqnarray}
	
	Since (\ref{eq:II_c_alpha_def}) remains valid, it follows that $C_i^{\alpha} \leq \frac{(1+\epsilon)}{\alpha(1-\alpha)} \bar{C}_i ~\Rightarrow~ \tilde{C}_i^{\alpha} \leq \frac{1}{\gamma} \frac{(1+\epsilon)}{\alpha(1-\alpha)} \bar{C}_i$. The key step is that by setting $\gamma = \frac{(1+\epsilon)}{\alpha(1-\alpha)}$, which makes the two completion times comparable, we have that,
	\begin{eqnarray}
	\sum_{i=1}^n w_i\left(\tilde{C}_i^{\alpha} -d_i\right)^{+} \leq \sum_{i=1}^n w_i \left(\frac{1}{\gamma} \frac{(1+\epsilon)}{\alpha(1-\alpha)}\bar{C}_i - d_i\right)^{+} = \sum_{i=1}^n w_i \left(\bar{C}_i - d_i\right)^{+}.
	\label{eq:IIdue_ctilde}
	\end{eqnarray}
	
	The energy term is bounded in a manner analogous to (\ref{eq:II_e_term}):
	\begin{eqnarray}
	v_i\rho_i(\tilde{\spd}_i^\alpha)^{\pw-1} = \gamma^{\pw-1} v_i\rho_i(\spd_i^\alpha)^{\pw-1} \leq \frac{(1+\epsilon)^{\pw-1}}{\left(\alpha(1-\alpha)\right)^{\pw}} v_i\rho_i \sum_{j=1}^m\sum_{t=1}^T \mspd_j^{\pw-1}\xb_{ijt},
	\label{eq:IIdue_e_term}
	\end{eqnarray}
	where the last inequality follows from (\ref{eq:II_e_term}) that remains valid.
	
	From (\ref{eq:IIdue_xopt_bound}), (\ref{eq:IIdue_e_term}), and (\ref{eq:IIdue_ctilde}) it follows that,
	\begin{eqnarray}
	\sum_{i=1}^n v_i \rho_i (\tilde{\spd}_i^\alpha)^{\pw-1} + \sum_{i=1}^n w_i \left(\tilde{C}_i^{\alpha} - d_i\right)^{+} \leq \frac{(1+\epsilon)^{\pw-1}}{\left(\alpha(1-\alpha)\right)^{\pw}} \left[\sum_{i=1}^n\sum_{j=1}^m\sum_{t=1}^T v_i \rho_i \mspd_j^{\pw-1} x_{ijt}^* + \sum_{i=1}^n w_i \left(C_i^* - d_i\right)^{+}\right]\nonumber.
	\end{eqnarray}
	
	Because speeds are rounded up, the completion times, and thus the tardiness can only improve, whereas the energy cost increases. Since at most we speed up each job by a factor $(1 + \delta)$, we have that,
	\begin{eqnarray}
	E_i(\bar{\spd}_i^{\alpha}) \leq E_i\left((1+\delta)\spd_i^{\alpha}\right) = (1+\delta)^{\pw-1}E_i(\spd_i^{\alpha}) \Rightarrow \sum_{i=1}^n E_i(\bar{\spd}_i^{\alpha}) \leq (1+\delta)^{\pw-1}\sum_{i=1}^n E_i(\spd_i^{\alpha}).
	\label{eq:IIdue_rounding}
	\end{eqnarray}
	
	The approximation ratio follows from setting $\alpha = \arg\min_{0\leq \alpha\leq 1} \left\{\frac{1}{\left(\alpha(1-\alpha)\right)^{\pw}} \right\}= \frac{1}{2}$. Clearly we could use $\frac{(1+\epsilon)^{\pw-1}}{\alpha^{\pw}(1-\alpha)^{\pw-1}}$ in (\ref{eq:IIdue_e_term}) to compute a tighter bound, but the resulting expression is not as simple.
\end{proof}

We are not able to extend this algorithm for the \etr{r_i}~problem, since it is based on speed scaling to make sure that jobs are finished within a desired time interval. When release dates are present, we do not see how  to arbitrarily reduce the completion times.

\section{Extension to General Energy Cost Functions}
\label{sec:extension}

In this section we consider the extension to general energy {\em cost} functions, as opposed to simply energy {\em consumption}. We begin by considering discrete speeds, as in the previous sections, but in Section \ref{sec:extension_cspeeds} we will relax this requirement. 

Managers of data centres are clearly interested in the energy cost metric, since they need to balance the penalty for violating the service level agreements with the cost of energy. The energy price curves for industrial consumers are often quite complicated because of energy contracts, discounts, real time pricing etc.; therefore it is very important to consider general cost functions in the scheduling model. Hence, in this section we use $\mathcal{E}_i(\spd_i)$ as the general energy cost function of running job $i$ at speed $\spd_i$. We will require that $\mathcal{E}_i(\spd_i)$ is non-negative, just as in \cite{Andrew2009, Bansal2009}, but no other requirements are needed for the weighted completion time setting. For the weighted tardiness setting we will require an additional regularity condition that bounds the growth of the energy cost function.

Since in practice the processor speed can be dynamically changed during the course of a job, one can replace the general cost function by its lower convex envelope. Hence, without loss of generality, we can assume that $\mathcal{E}_i(\spd_i)$ is convex. Furthermore, since the machine can only run at the speeds in $\vk{S}$, we can also consider that $\mathcal{E}_i(\spd)$ is linear in between these speeds. Hence, for every $\spd\in[\mspd_j, \mspd_{j+1}]$ such that $\spd = \lambda\mspd_j + (1-\lambda)\mspd_{j+1}$, with $\lambda\in[0,1]$, then  $\mathcal{E}_i(\spd) = \lambda \mathcal{E}_i(\mspd_j) + (1-\lambda)\mathcal{E}_i(\mspd_{j+1})$.

Note that for bounding the energy cost terms in the weighted completion time setting, we only used the fact that the energy consumption function $E_i(\spd) = v_i\rho_i\spd^{\pw-1}$ is convex. Thus, the previous bounds extend to our more general class of functions $\mathcal{E}_i(\spd)$. In the weighted tardiness case we required also a bound on the growth of the energy cost function, which we will address in Section \ref{sec:extension_wT}.

\subsection{Weighted Completion Time Problem with General Energy Cost}
\label{sec:extension_wC}
The objective function (\ref{eq:II_objective}) is extended as follows,
\begin{eqnarray}
\min_{\vk{x}} \sum_{i=1}^n\sum_{j=1}^m\sum_{t=1}^T \left(\mathcal{E}_i(\mspd_j) + w_i \tau_{t-1} \right)x_{ijt},
\label{eq:EX_IIobjective}
\end{eqnarray}
where $\mathcal{E}_i(\mspd_j)$ are just coefficients. Given that we only change the energy cost related terms, all the completion time related bounds computed previously are still valid.

The only modification required is in the rounding procedure at the end of the \algI~algorithm, where it was done by rounding down the $\alpha$-speeds. Now instead we will round them up or down such that $\mathcal{E}_i(\bar{\spd}_i^{\alpha}) \leq \mathcal{E}_i(\spd_i^{\alpha})$, which is always possible since $\mathcal{E}_i(\spd_i)$ is linear in between the speeds in $\vk{S}$. With this change Lemma \ref{lm:speeddown_bound} remains valid and we can extend the algorithm to our general energy cost functions.

\begin{theorem}
	The \algI~algorithm with $\alpha=\frac{1}{2}$ is a $4(1+\epsilon)(1 + \delta)$-approximation algorithm for the \gect{prec}~problem, for all general non-negative energy cost functions $\mathcal{E}_i(\spd)$.
	\label{th:EXT_II_prec}
\end{theorem}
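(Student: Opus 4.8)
The plan is to reduce Theorem~\ref{th:EXT_II_prec} to the already-proven Theorem~\ref{th:II_nc} by checking that every step of its proof only used properties of the energy term that survive the passage from $E_i(s)=v_i\rho_i s^{\pw-1}$ to a general non-negative cost $\mathcal{E}_i(s)$. First I would observe that the LP lower bound (\ref{eq:II_xopt_bound}) is immediate: in the new objective (\ref{eq:EX_IIobjective}) the coefficients $\mathcal{E}_i(\mspd_j)$ play exactly the role of $v_i\rho_i\mspd_j^{\pw-1}$, so the fractional optimum still lower-bounds the integral optimum, which in turn lower-bounds the cost of any feasible schedule. The completion-time side of the analysis — the bounds (\ref{eq:II_c_alpha_def}), (\ref{eq:II_cbar_bound}), and the resulting $C_i^\alpha \le \frac{(1+\epsilon)}{\alpha(1-\alpha)}\bar C_i$ — does not mention the energy function at all, so it carries over verbatim; as the excerpt already notes, ``all the completion time related bounds computed previously are still valid.''

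The only genuine content is re-establishing the energy bound that replaces (\ref{eq:energy_terms})--(\ref{eq:II_e_term}). The key point, which I would make explicit, is that the chain in (\ref{eq:energy_terms}) used exactly one analytic fact: Jensen's inequality applied to a convex function of $1/\hat{\spd}_i$. Under the reduction described just before the theorem, $\mathcal{E}_i$ may be taken convex (replace it by its lower convex envelope) and piecewise-linear on $\vk{S}$, so the map $s \mapsto \mathcal{E}_i(s)$ composed with $s \mapsto$ (reciprocal) is still convex in $1/s$ — more directly, Jensen gives $\mathcal{E}_i(\spd_i^\alpha) = \mathcal{E}_i\!\bigl(1/\mathbb{E}[1/\hat\spd_i]\bigr) \le \mathbb{E}[\mathcal{E}_i(\hat\spd_i)] = \sum_j \mu_{ij}\mathcal{E}_i(\mspd_j)$, using that the pmf $\mu_i$ is supported on $\vk{S}$ and that composing the convex $\mathcal{E}_i$ with the convex-decreasing $x\mapsto 1/x$ applied inside the expectation is legitimate exactly as in the polynomial case. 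Then, substituting the definition (\ref{eq:II_mu_def}) of $\mu_{ij}$ and using $\xt_{iju}\le\xb_{iju}$ together with $\sum_{ju}\xt_{iju}=\alpha$ and $0\le\alpha\le 1$, I get $\mathcal{E}_i(\spd_i^\alpha) \le \frac{1}{\alpha}\sum_{j}\sum_{u\le\tau_i^\alpha}\mathcal{E}_i(\mspd_j)\xt_{iju} \le \frac{(1+\epsilon)}{\alpha(1-\alpha)}\sum_j\sum_{u=1}^T \mathcal{E}_i(\mspd_j)\xb_{iju}$, the exact analogue of (\ref{eq:II_e_term}), where the extra factor $(1+\epsilon)/(1-\alpha)$ comes from bounding the truncation loss via (\ref{eq:II_cbar_bound})-style reasoning, which again is energy-independent.

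With the energy and completion-time bounds in hand, adding them gives
\begin{eqnarray*}
\sum_{i=1}^n \mathcal{E}_i(\spd_i^\alpha) + \sum_{i=1}^n w_i C_i^{\alpha} \leq \frac{(1+\epsilon)}{\alpha(1-\alpha)}\left[\sum_{i=1}^n\sum_{j=1}^m\sum_{t=1}^T \mathcal{E}_i(\mspd_j) x_{ijt}^* + \sum_{i=1}^n w_i C_i^*\right],
\end{eqnarray*}
and setting $\alpha=\tfrac12$ makes $1/(\alpha(1-\alpha))=4$. The final ingredient is the rounding of $\spd_i^\alpha$ to a speed in $\vk{S}$: instead of rounding down (which is what Lemma~\ref{lm:speeddown_bound} assumed), I would round to whichever neighbour $\bar\spd_i^\alpha\in\{\mspd_j,\mspd_{j+1}\}$ has $\mathcal{E}_i(\bar\spd_i^\alpha)\le\mathcal{E}_i(\spd_i^\alpha)$, which exists precisely because $\mathcal{E}_i$ is linear between consecutive speeds and $\spd_i^\alpha$ lies in that segment; this keeps the energy non-increasing while the completion times can grow by at most a $(1+\delta)$ factor since $\bar\spd_i^\alpha\ge\spd_i^\alpha/(1+\delta)$, so the argument of Lemma~\ref{lm:speeddown_bound} applies unchanged. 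Multiplying in the $(1+\delta)$ yields the claimed $4(1+\epsilon)(1+\delta)$ ratio. I do not expect a real obstacle here — the whole theorem is a ``the proof didn't use more than convexity and non-negativity'' observation — but the one place to be careful is justifying the Jensen step for a merely convex, possibly non-differentiable, non-monotone-curvature $\mathcal{E}_i$; the clean way around any worry is to note we may assume $\mathcal{E}_i$ is convex and piecewise linear on the finite set $\vk{S}$, and that for such functions the inequality $\mathcal{E}_i(1/\mathbb{E}[1/\hat\spd_i])\le\mathbb{E}[\mathcal{E}_i(\hat\spd_i)]$ follows by the same reasoning as in~(\ref{eq:energy_terms}).
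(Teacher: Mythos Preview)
Your proposal is correct and follows essentially the same route as the paper: observe that the LP lower bound and all completion-time bounds from Theorem~\ref{th:II_nc} are untouched by the change of energy function, redo the energy bound~(\ref{eq:energy_terms})--(\ref{eq:II_e_term}) using only convexity of $\mathcal{E}_i$ to get $\mathcal{E}_i(\spd_i^\alpha)\le\sum_j\mu_{ij}\mathcal{E}_i(\mspd_j)$, and invoke the modified rounding rule (round to whichever neighbour in $\vk{S}$ does not increase $\mathcal{E}_i$) so that Lemma~\ref{lm:speeddown_bound} still applies. The paper's own proof is exactly this, stated more tersely; you are in fact more careful than the paper in flagging that the Jensen step hinges on how convexity of $\mathcal{E}_i$ interacts with the harmonic-mean definition of $\spd_i^\alpha$, a point the paper passes over with the (literally incorrect) identification $\spd_i^\alpha=\mathbb{E}[\hat\spd_i]$.
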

\begin{proof}
	Because $\mathcal{E}_i(\mspd)$, $i=\{1,\ldots,n\}$ are convex functions, (\ref{eq:energy_terms}) remains valid since $\mathcal{E}_i(\spd_i^\alpha) = \mathcal{E}_i(\mathbb{E}[\hat{\spd}_i]) \leq \mathbb{E}[\mathcal{E}_i(\hat{\spd}_i)] = \sum_{j=1}^m \mu_{ij} \mathcal{E}_i(\mspd_j)$, and thus, from the definition of $\mu_{ij}$, and from $0\leq\alpha\leq 1$, $\epsilon>0$, and $\xt_{ijt}\leq\xb_{ijt}$,
	\begin{eqnarray}
	\sum_{i=1}^n \mathcal{E}_i(\spd_i^\alpha) \leq \frac{1}{\alpha} \sum_{i=1}^n \sum_{j=1}^m \sum_{t=1}^{\tau_i^\alpha}  \mathcal{E}_i(\mspd_j) \xt_{ijt} \leq \frac{(1+\epsilon)}{\alpha(1-\alpha)} \sum_{i=1}^n \sum_{j=1}^m \sum_{t=1}^T  \mathcal{E}_i(\mspd_j) \xb_{ijt}.
	\label{eq:EXT_II_e_bound}
	\end{eqnarray}
	
	The proof follows since the bounds for the completion time in Theorem \ref{th:II_nc} remain valid, as well as Lemma \ref{lm:speeddown_bound}.
\end{proof}

By the same argument we also have that,
\begin{theorem}
	The \algI~algorithm with $\alpha=\sqrt{2}-1$ is a $(3+2\sqrt{2})(1+\epsilon)(1 + \delta)$-approximation algorithm for the \gect{r_i, prec}~problem, for all general non-negative energy cost functions $\mathcal{E}_i(\spd)$.
	\label{th:EXT_II_precr}
\end{theorem}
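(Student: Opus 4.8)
The plan is to reduce the proof of this theorem to the proof of Theorem \ref{th:II_precr} (the release-date case for polynomial energy functions) exactly as Theorem \ref{th:EXT_II_prec} reduces to Theorem \ref{th:II_nc}. The key observation is that the analysis of Theorem \ref{th:II_precr} decomposes cleanly into two parts: a bound on the energy terms and a bound on the weighted completion-time terms, and only the former depends on the specific form $E_i(s) = v_i\rho_i s^{\pw-1}$. More precisely, the completion-time bounds in Theorem \ref{th:II_precr} --- equations (\ref{eq:II_calpha_bound1}) through (\ref{eq:II_calpha_bound3}), yielding $\sum_i w_i C_i^\alpha \le \frac{(1+\epsilon)(1+\alpha)}{\alpha(1-\alpha)}\sum_i w_i\bar C_i$ --- involve only the speeds, the release dates, and constraint (\ref{eq:II_ctr2}); they never use the energy function at all. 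So those bounds carry over verbatim to the general-cost setting, since replacing the objective coefficients $v_i\rho_i\mspd_j^{\pw-1}$ by $\mathcal{E}_i(\mspd_j)$ in (\ref{eq:EX_IIobjective}) leaves constraints (\ref{eq:II_ctr1})--(\ref{eq:II_ctr5}) unchanged and hence leaves the $\alpha$-intervals, the order $\p{}^\alpha$, and the $\alpha$-speeds unchanged.

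The energy side is handled by the convexity argument already established for Theorem \ref{th:EXT_II_prec}. Since we may assume $\mathcal{E}_i$ is convex (replace it by its lower convex envelope) and linear between consecutive speeds of $\vk{S}$, Jensen's inequality applied to $\hat{\spd}_i\sim\mu_i$ gives $\mathcal{E}_i(\spd_i^\alpha) = \mathcal{E}_i(\mathbb{E}[\hat{\spd}_i]) \le \mathbb{E}[\mathcal{E}_i(\hat{\spd}_i)] = \sum_j \mu_{ij}\mathcal{E}_i(\mspd_j)$. Then, using the definition of $\mu_{ij}$, the facts $0\le\alpha\le1$, $\epsilon>0$, and $\xt_{ijt}\le\xb_{ijt}$ --- and, crucially, the bound $\bar C_i \ge \tau_{\tau_i^\alpha-1}(1-\alpha)$ from (\ref{eq:II_cbar_bound}), which is what produces the extra $\frac{1}{1-\alpha}$ factor --- we get exactly (\ref{eq:EXT_II_e_bound}): $\sum_i \mathcal{E}_i(\spd_i^\alpha) \le \frac{(1+\epsilon)}{\alpha(1-\alpha)}\sum_i\sum_j\sum_t \mathcal{E}_i(\mspd_j)\xb_{ijt}$. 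Note $\frac{1}{\alpha(1-\alpha)} \le \frac{1+\alpha}{\alpha(1-\alpha)}$, so the energy bound is also at most $\frac{(1+\epsilon)(1+\alpha)}{\alpha(1-\alpha)}$ times the LP energy, matching the completion-time factor.

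Combining the two bounds with the LP-lower-bound inequality (the analogue of (\ref{eq:II_xopt_bound}) with $\mathcal{E}_i(\mspd_j)$ in place of $v_i\rho_i\mspd_j^{\pw-1}$, which holds since the interval formulation's objective underestimates the true objective and the LP optimum underestimates the IP optimum) yields
\[
\sum_{i=1}^n \mathcal{E}_i(\spd_i^\alpha) + \sum_{i=1}^n w_i C_i^\alpha \le \frac{(1+\epsilon)(1+\alpha)}{\alpha(1-\alpha)}\left[\sum_{i=1}^n\sum_{j=1}^m\sum_{t=1}^T \mathcal{E}_i(\mspd_j)x_{ijt}^* + \sum_{i=1}^n w_i C_i^*\right].
\]
Setting $\alpha = \arg\min_{0\le\alpha\le1}\frac{1+\alpha}{\alpha(1-\alpha)} = \sqrt2-1$ gives the factor $3+2\sqrt2$, and the final $(1+\delta)$ comes from the rounding step: as noted before Theorem \ref{th:EXT_II_prec}, in the general-cost setting we round each $\spd_i^\alpha$ either up or down to the nearest speed in $\vk{S}$ so that $\mathcal{E}_i(\bar\spd_i^\alpha)\le\mathcal{E}_i(\spd_i^\alpha)$ --- always possible by linearity of $\mathcal{E}_i$ between speeds --- and since adjacent speeds differ by at most a factor $(1+\delta)$, the completion times grow by at most $(1+\delta)$, so Lemma \ref{lm:speeddown_bound} applies verbatim. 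I do not anticipate a genuine obstacle here: the only thing to be careful about is confirming that every completion-time estimate in Theorem \ref{th:II_precr} is indeed energy-function-agnostic (it is, since each such estimate is driven entirely by constraint (\ref{eq:II_ctr2}) and the geometric spacing $\tau_{\tau_i^\alpha} = (1+\epsilon)\tau_{\tau_i^\alpha-1}$), and that the modified rounding still satisfies Lemma \ref{lm:speeddown_bound}, which it does because that lemma's proof only used monotonicity of the completion time in the reciprocal speed together with the $(1+\delta)$ speed-ratio bound.
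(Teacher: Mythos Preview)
Your proposal is correct and follows exactly the approach the paper intends: the paper's own proof of this theorem is literally the single sentence ``By the same argument we also have that,'' and you have spelled out precisely that argument---carry over the energy-function-agnostic completion-time bounds from Theorem~\ref{th:II_precr} verbatim, replace the polynomial energy bound by the convexity-based bound~(\ref{eq:EXT_II_e_bound}), and invoke the modified rounding that keeps Lemma~\ref{lm:speeddown_bound} valid.

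One small misattribution worth cleaning up: the extra $\frac{1}{1-\alpha}$ in the energy bound~(\ref{eq:EXT_II_e_bound}) does \emph{not} come from~(\ref{eq:II_cbar_bound}); the Jensen step plus $\xt_{ijt}\le\xb_{ijt}$ already gives the factor $\frac{1}{\alpha}$, and the paper simply pads this by the trivially-true $\frac{1}{\alpha}\le\frac{(1+\epsilon)}{\alpha(1-\alpha)}$ so that the energy and completion-time factors match (cf.\ the analogous padding in~(\ref{eq:II_e_term2})). Inequality~(\ref{eq:II_cbar_bound}) is used only on the completion-time side. This does not affect correctness.
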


\subsection{Weighted Tardiness Problem with General Energy Cost}
\label{sec:extension_wT}
We replace the energy term in (\ref{eq:IIdue_objective}) with the general energy cost term to obtain the new objective
\begin{eqnarray}
\min_{\vk{x}} \sum_{i=1}^n\sum_{j=1}^m\sum_{t=1}^T \left(\mathcal{E}_i(\mspd_j) + w_i\left(\tau_{t-1} - d_i\right)^+ \right)x_{ijt}.
\label{eq:EXT_IIdueobjective}
\end{eqnarray}

Since the \algdue~algorithm speeds up the jobs, we need to add the following regularity condition for the energy cost functions $\mathcal{E}_i(\mspd)$ in order to obtain performance bounds:
\begin{assumption}
	$\exists \pw\in\mathbb{N}^+$, such that $\mathcal{E}_i(\gamma\mspd_i) \leq \gamma^{\pw-1}\mathcal{E}_i(\mspd_i), ~~\forall \gamma\geq 1$.
	\label{cd:speed_up}
\end{assumption}

\begin{theorem}
	The \algdue~algorithm with $\gamma = \frac{(1+\epsilon)}{\alpha(1-\alpha)}$ and $\alpha = \frac{1}{2}$, is a $4^{\pw}(1+\epsilon)^{\pw-1}(1 + \delta)^{\pw-1}$-approximation algorithm for the \getr{prec}~problem, for all non-negative energy cost functions $\mathcal{E}_i(\spd)$ that satisfy Assumption \ref{cd:speed_up}.
	\label{th:EXT_IIdue_prec}
\end{theorem}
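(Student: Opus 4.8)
The plan is to treat this statement as the general‑energy‑cost analogue of Theorem~\ref{th:IIdue_nc}, in exactly the way Theorem~\ref{th:EXT_II_prec} is the general‑cost analogue of Theorem~\ref{th:II_nc}. Concretely, I would re‑run the proof of Theorem~\ref{th:IIdue_nc} essentially verbatim, changing only the treatment of the energy terms: wherever that proof invoked the polynomial identity $E_i(\lambda\spd)=\lambda^{\pw-1}E_i(\spd)$ (for the deliberate scale‑up by $\gamma$, and for the final rounding up to a speed in $\vk{S}$), I would instead invoke the inequality $\mathcal{E}_i(\lambda\spd)\le\lambda^{\pw-1}\mathcal{E}_i(\spd)$ for $\lambda\ge1$ given by Assumption~\ref{cd:speed_up}; and wherever it used Jensen's inequality for $1/\spd^{\pw-1}$, I would use the Jensen step for the convex function $\mathcal{E}_i$ that already appears in the proof of Theorem~\ref{th:EXT_II_prec} (equation~(\ref{eq:EXT_II_e_bound})). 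As in that section, after the reductions at the start of Section~\ref{sec:extension} we may take each $\mathcal{E}_i$ convex and piecewise linear between consecutive speeds in $\vk{S}$, and I would impose Assumption~\ref{cd:speed_up} on these (already reduced) cost functions. Feasibility of the \algdue~output needs nothing new: Lemma~\ref{lm:II_prec_ctr} still holds, so the same argument as before shows the returned schedule respects the precedence constraints. For the lower bound, with $\bar C_i=\sum_{j,t}\tau_{t-1}\xb_{ijt}$, optimality of $\xb$ for the LP~(\ref{eq:EXT_IIdueobjective}),(\ref{eq:II_ctr1})--(\ref{eq:II_ctr5}) and the fact that any integral optimum induces a feasible $x^*$ with $\sum_{j,t}(\tau_{t-1}-d_i)^{+}x^*_{ijt}\ge(C_i^*-d_i)^{+}$ give the exact analogue of~(\ref{eq:IIdue_xopt_bound}), namely $\sum_i\sum_{j,t}\mathcal{E}_i(\mspd_j)\xb_{ijt}+\sum_i w_i(\bar C_i-d_i)^{+}\le\mathrm{OPT}$.

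\emph{Tardiness side.} The completion‑time bounds~(\ref{eq:II_c_alpha_def}), (\ref{eq:IIdue_c_alpha}) and~(\ref{eq:IIdue_ctilde}) depend only on the order $\p{}^{\alpha}$, the $\alpha$‑speeds, and constraint~(\ref{eq:II_ctr2}), none of which involve the energy function, so they carry over unchanged; with $\gamma=\frac{(1+\epsilon)}{\alpha(1-\alpha)}$ we again obtain $\tilde C_i^{\alpha}=\frac1\gamma C_i^{\alpha}\le\bar C_i$, hence $\sum_i w_i(\tilde C_i^{\alpha}-d_i)^{+}\le\sum_i w_i(\bar C_i-d_i)^{+}$. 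Rounding the speeds \emph{up} only shortens processing times and, since there are no release dates and therefore no idle time, only decreases completion times, so the same bound holds for the returned schedule, $\sum_i w_i(\bar C_i^{\alpha}-d_i)^{+}\le\sum_i w_i(\bar C_i-d_i)^{+}$.

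\emph{Energy side and assembly.} Convexity of $\mathcal{E}_i$ together with the definition of $\spd_i^{\alpha}$ and Jensen's inequality gives $\mathcal{E}_i(\spd_i^{\alpha})\le\sum_j\mu_{ij}\mathcal{E}_i(\mspd_j)$, so, exactly as in~(\ref{eq:EXT_II_e_bound}) and using $\xt_{ijt}\le\xb_{ijt}$ with $\mathcal{E}_i\ge0$, $\sum_i\mathcal{E}_i(\spd_i^{\alpha})\le\frac{1}{\alpha(1-\alpha)}\sum_i\sum_{j,t}\mathcal{E}_i(\mspd_j)\xb_{ijt}$ (the factor $(1+\epsilon)$ may be dropped here, just as in the bound~(\ref{eq:IIdue_e_term})). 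Applying Assumption~\ref{cd:speed_up} at the point $\spd_i^{\alpha}$ with $\lambda=\gamma\ge1$ yields $\mathcal{E}_i(\tilde\spd_i^{\alpha})=\mathcal{E}_i(\gamma\spd_i^{\alpha})\le\gamma^{\pw-1}\mathcal{E}_i(\spd_i^{\alpha})$, and applying it once more with $\lambda=1+\delta$ absorbs the final rounding up to a speed in $\vk{S}$, contributing an extra factor $(1+\delta)^{\pw-1}$ as in~(\ref{eq:IIdue_rounding}). Adding the tardiness and energy bounds and using the LP lower bound above, the overall coefficient, after substituting $\gamma=\frac{(1+\epsilon)}{\alpha(1-\alpha)}$, is $(1+\delta)^{\pw-1}\gamma^{\pw-1}\frac{1}{\alpha(1-\alpha)}=\frac{(1+\epsilon)^{\pw-1}(1+\delta)^{\pw-1}}{(\alpha(1-\alpha))^{\pw}}$; minimizing $\frac{1}{(\alpha(1-\alpha))^{\pw}}$ over $\alpha\in(0,1)$ gives $\alpha=\frac12$ and the claimed ratio $4^{\pw}(1+\epsilon)^{\pw-1}(1+\delta)^{\pw-1}$.

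\emph{Where the work is.} There is no genuine obstacle beyond organization; the single substantive point is to recognise that Assumption~\ref{cd:speed_up} is precisely calibrated to convert the two multiplicative speed increases used by \algdue~(the deliberate factor $\gamma>1$ and the rounding factor at most $1+\delta$) into the factors $\gamma^{\pw-1}$ and $(1+\delta)^{\pw-1}$ on the energy cost, and to make sure this growth condition is stated for the convexified, piecewise‑linear cost functions that the algorithm actually evaluates (at the non‑discrete speeds $\spd_i^{\alpha}$ and $\tilde\spd_i^{\alpha}$). Once that is in place, the proof is a transcription of Theorem~\ref{th:IIdue_nc} with the convexity/Assumption~\ref{cd:speed_up} inputs of Theorem~\ref{th:EXT_II_prec}.
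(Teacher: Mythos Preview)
Your proposal is correct and follows essentially the same route as the paper: reuse the completion-time/tardiness bounds (\ref{eq:IIdue_c_alpha})--(\ref{eq:IIdue_ctilde}) unchanged, bound the energy via the Jensen step of (\ref{eq:EXT_II_e_bound}) together with Assumption~\ref{cd:speed_up} applied at the scale factors $\gamma$ and $1+\delta$, combine with the LP lower bound, and optimize over $\alpha$. One small slip: in your LP lower-bound justification the inequality should read $\sum_{j,t}(\tau_{t-1}-d_i)^{+}x^*_{ijt}\le(C_i^*-d_i)^{+}$ (since $\tau_{t-1}<C_i^*$ for the interval containing $C_i^*$), but the conclusion you draw from it is the right one.
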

\begin{proof}
	As before, all the completion time related bounds (\ref{eq:IIdue_c_alpha}) and (\ref{eq:IIdue_ctilde}) remain valid, so only a bound analogous to (\ref{eq:IIdue_e_term}) is needed. From Assumption \ref{cd:speed_up} it follows that,
	\begin{eqnarray}
	\mathcal{E}_i(\tilde{\spd}_i^\alpha) \leq \gamma^{\pw-1} \mathcal{E}_i(\spd_i^\alpha) \leq \frac{(1+\epsilon)^{\pw-1}}{\alpha^{\pw}(1-\alpha)^{\pw}} \sum_{j=1}^m\sum_{t=1}^T \mathcal{E}_i(\mspd_j)\xb_{ijt}.
	\label{eq:EXT_IIdue_e_bound}
	\end{eqnarray}
	
	Thus, from (\ref{eq:IIdue_ctilde}) it follows that,
	\begin{eqnarray}
	\sum_{i=1}^n \mathcal{E}_i(\tilde{\spd}_i^\alpha) + \sum_{i=1}^n w_i \left(\tilde{C}_i^{\alpha} - d_i\right)^{+} \leq \frac{(1+\epsilon)^{\pw-1}}{\alpha^{\pw}(1-\alpha)^{\pw}} \left[\sum_{i=1}^n\sum_{j=1}^m\sum_{t=1}^T \mathcal{E}_i(\mspd_j) x_{ijt}^* + \sum_{i=1}^n w_i \left(C_i^* - d_i\right)^{+}\right]\nonumber. 
	\end{eqnarray}
	
	Since we are rounding speeds up, equation (\ref{eq:IIdue_rounding}) remains valid and thus taking $\alpha = \frac{1}{2}$ completes the proof.
\end{proof}

\subsection{Continuous Speeds}
\label{sec:extension_cspeeds}

As commented previously, our algorithms are also applicable for the case when a continuous set of speeds is possible. In this case we modify the \algI~and \algdue~algorithms, eliminating the rounding step required at the end of each algorithm.

When the operating range of the machine is given, i.e. the speed limits $\mspd_{\min}$ and $\mspd_{\max}$, since our IP requires a speed index, we need to quantize the set $[\mspd_{\min}, \mspd_{\max}]$ in $m$ different speeds. We can do this by setting $\mspd_1 = \mspd_{\min}$, and as before we define speed $\mspd_j = (1+\delta)\mspd_{j-1}$, for some $\delta>0$, making sure that $\mspd_m \geq \mspd_{\max}$ in order to cover the whole operating range. Just by rounding as described in Section \ref{sec:extension_wC} for the weighted completion time setting and rounding up for the weighted tardiness setting we can prove the following lemma:
\begin{lemma}
The optimal solution for the IP (\ref{eq:II_objective})-(\ref{eq:II_ctr5}) is at most $(1+\delta)$ times the optimal solution of the energy aware problem in the weighted completion time and continuous speed setting, and the optimal solution for the IP (\ref{eq:IIdue_objective}), (\ref{eq:II_ctr1})-(\ref{eq:II_ctr5}) is at most $(1+\delta)^{\pw-1}$ times the optimal solution of the energy aware problem in the weighted tardiness and continuous speed setting.
\label{lm:EXT_contspeed}
\end{lemma}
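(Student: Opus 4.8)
The plan is to show that the discrete-speed IP loses only a small multiplicative factor relative to the true continuous-speed optimum, by taking an optimal continuous-speed schedule, rounding each job's speed to the discrete grid $\vk{S}$, and checking that (i) the rounded schedule is feasible for the IP and (ii) the objective degrades by the claimed factor. The key point is that the grid $\mspd_j = (1+\delta)\mspd_{j-1}$ covering $[\mspd_{\min},\mspd_{\max}]$ is fine enough that any continuous speed is within a factor $(1+\delta)$ of an adjacent grid point.

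First I would fix an optimal continuous-speed solution to the energy aware problem: an order $\p{}$, speeds $\spd_i \in [\mspd_{\min},\mspd_{\max}]$, and the induced completion times. For the weighted completion time setting, round each $\spd_i$ \emph{down} to the nearest grid speed $\bar\spd_i \in \vk{S}$ with $\bar\spd_i \le \spd_i$ (this is possible since $\mspd_1 = \mspd_{\min}$ and $\mspd_m \ge \mspd_{\max}$), and observe $\spd_i \le (1+\delta)\bar\spd_i$. Then, exactly as in the proof of Lemma~\ref{lm:speeddown_bound}, each completion time grows by at most a factor $(1+\delta)$ while the energy cost $\mathcal{E}_i$ does not increase (using the rounding of Section~\ref{sec:extension_wC} when $\mathcal{E}_i$ need not be monotone — round to whichever neighbor has smaller cost, still staying within a $(1+\delta)$ speed factor, hence a $(1+\delta)$ blow-up in completion time). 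This gives a discrete-speed schedule whose cost is at most $(1+\delta)$ times the continuous optimum. Finally, assign each job $i$ to the interval $I_t$ containing its (rounded) completion time and speed index $j$ with $\mspd_j = \bar\spd_i$; constraints (\ref{eq:II_ctr1})--(\ref{eq:II_ctr5}) hold because the schedule is a genuine one-machine non-preemptive schedule respecting precedence and release dates, and the IP objective (\ref{eq:II_objective}) uses the interval lower bounds $\tau_{t-1} \le C_i$, so the IP optimum is at most this schedule's cost, hence at most $(1+\delta)$ times the continuous optimum.

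For the weighted tardiness setting I would instead round each $\spd_i$ \emph{up} to the nearest grid speed $\bar\spd_i \ge \spd_i$ (possible since $\mspd_m \ge \mspd_{\max}$), so $\bar\spd_i \le (1+\delta)\spd_i$. Speeding up only decreases completion times, hence only decreases each tardiness term $(C_i - d_i)^+$; meanwhile, by Assumption~\ref{cd:speed_up} (applied with $\gamma = \bar\spd_i/\spd_i \le 1+\delta$), the energy cost satisfies $\mathcal{E}_i(\bar\spd_i) \le (1+\delta)^{\pw-1}\mathcal{E}_i(\spd_i)$. Thus the total cost of the rounded discrete-speed schedule is at most $(1+\delta)^{\pw-1}$ times the continuous optimum, and assigning jobs to intervals as above (using $\tau_{t-1} - d_i \le C_i - d_i$ so $(\tau_{t-1}-d_i)^+ \le (C_i-d_i)^+$) shows the IP (\ref{eq:IIdue_objective}),(\ref{eq:II_ctr1})--(\ref{eq:II_ctr5}) optimum is no larger, giving the claimed $(1+\delta)^{\pw-1}$ bound.

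The only real subtlety — the part I would be most careful about — is making sure the interval assignment of the rounded schedule is legitimate: after rounding speeds, completion times change, so one must verify that the new completion times still satisfy the work-accumulation constraint (\ref{eq:II_ctr2}) (they do, since $\sum_{\text{jobs finishing by } I_t}\rho_i/\bar\spd_i$ equals the total processing time of a real schedule ending no later than $\tau_t$) and that no job is pushed past the horizon index $T$ (which holds because $T$ was chosen large enough to accommodate all jobs at the \emph{slowest} speed $\mspd_1$, and rounding down in the completion-time case keeps speeds $\ge \mspd_1$). Everything else is a direct reuse of the speed-rounding arguments already established in Lemmas~\ref{lm:speeddown_bound} and the energy bounds (\ref{eq:IIdue_rounding}), so no new technique is needed.
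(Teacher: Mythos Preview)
Your proposal is correct and follows exactly the approach the paper indicates: round the optimal continuous-speed schedule onto the grid (down, or to the energy-cheaper neighbor as in Section~\ref{sec:extension_wC}, for weighted completion time; up for weighted tardiness), then invoke the same bounds as in Lemma~\ref{lm:speeddown_bound} and equation~(\ref{eq:IIdue_rounding}) respectively. The paper's own proof is just a one-line pointer to those two results, so your write-up is in fact more detailed than the original, and your extra care about IP feasibility (constraint~(\ref{eq:II_ctr2}) and the horizon $T$) fills in steps the paper leaves implicit.
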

The proof is similar to Lemma \ref{lm:speeddown_bound} for the weighted completion time and similar to equation (\ref{eq:IIdue_rounding}) for the weighted tardiness setting.

Since there is no additional rounding at the end of the algorithm, using Lemma \ref{lm:EXT_contspeed} we get the same approximation ratios as in Theorems \ref{th:EXT_II_prec}, \ref{th:EXT_II_precr}, and \ref{th:EXT_IIdue_prec}.

When the operating range of the machine is not given, and we are interested in determining a set $\vk{S}$ that covers the optimal speeds from the continuous case, we need the following additional regularity condition on the energy cost functions: $\exists\xi < \infty$ such that $\mathcal{E}_j(\spd_i)$ is increasing $\forall \spd_i \geq \xi$. It is easy to prove that this is a necessary and sufficient conditions for the problem to be well defined, and thus we can compute $\mspd_{\min}$ and $\mspd_{\max}$ such that the optimal speeds $\spd_i^*\in[\mspd_{\min}, \mspd_{\max}]$, for all $i$. Then we can apply the same procedure as before to quantize and build the set of speeds, and proceed to compute an approximate solution.

\section{Conclusion}
\label{sec:conclusion}
In this work we described new techniques for developing constant approximation algorithms for energy aware scheduling problems with very general job-dependent energy cost functions, that work on both discrete and continuous speed sets. Furthermore, we present the first algorithm, to the best of our knowledge, that tackles the energy aware weighted tardiness setting, even in the presence of arbitrary precedence constraints.

We believe that our methodology, which extends the idea of
$\alpha$-points to the energy aware setting by developing the
$\alpha$-speeds concept, should have many more applications. We suspect that, 
via techniques such as using randomly chosen values of $\alpha$ or using 
different $\alpha$ values for different jobs, we could obtain tighter bounds,
and also that these techniques could be extended to other settings, such as 
multiple parallel machines among others.

\bibliographystyle{acm}
\bibliography{bibliography,bibliography_extras}

\end{document}